\newcommand{\declarecolor}[2]{\definecolor{#1}{RGB}{#2}\expandafter\newcommand\csname #1\endcsname[1]{\textcolor{#1}{##1}}}
\definecolor{plotblue}{HTML}{377eb8}
\definecolor{plotorange}{HTML}{ff7f00}
\definecolor{plotgreen}{HTML}{4daf4a}
\definecolor{darkblue}{rgb}{0, 0, 0.5}
\newtheorem{theorem}{Theorem}[section]
\newtheorem{lemma}[theorem]{Lemma}
\newtheorem{fact}[theorem]{Fact}
\newtheoremstyle{named}{}{}{\itshape}{}{\bfseries}{.}{.5em}{\thmnote{#3 }#1}
\theoremstyle{named}
\def\namedlabel#1#2{\begingroup
   \def\@currentlabel{#2}%
   \label{#1}\endgroup
}
\newcommand{\cV}{\mathcal{V}}
\newcommand{\ie}{{i.e.,~\xspace}}
\newcommand{\eg}{{e.g.,~\xspace}}
\newcommand{\LB}{\mathrm{LB}}
\newcommand{\UB}{\mathrm{UB}}
\title{Breaking Distortion-free Watermarks in Large Language Models}
\author{
Shayleen Reynolds\thanks{J.P.Morgan AI Research, $\{$\texttt{shayleen.reynolds, tuandung.ngo, saheed.obitayo, niccolo.dalmasso, vamsi.k.potluru, manuela.veloso}$\}$\texttt{@jpmchase.com}}$\,$
    \And
   Hengzhi He \thanks{Department of Statistics and Data Science, UCLA, $\{$\texttt{hengzhihe, guangcheng}$\}$\texttt{@ucla.edu}}$\,$
  \And
  Dung Daniel T. Ngo\footnotemark[1]
  \And
  Saheed Obitayo\footnotemark[1]
  \And
  Niccol\`o Dalmasso\footnotemark[1]
  \And
  Guang Cheng \footnotemark[2]
  \And
  Vamsi K. Potluru\footnotemark[1]
  \And
  Manuela Veloso\footnotemark[1]
}
\begin{document}

\maketitle

\begin{abstract}
  In recent years, LLM watermarking has emerged as an attractive safeguard against AI-generated content, with promising applications in many real-world domains. However, there are growing concerns that the current LLM watermarking schemes are vulnerable to expert adversaries wishing to reverse-engineer the watermarking mechanisms. Prior work in \textquote{breaking} or \textquote{stealing} LLM watermarks mainly focuses on the distribution-modifying algorithm of \citet{kirchenbauer2023watermark}, which perturbs the logit vector before sampling. In this work, we focus on reverse-engineering the other prominent LLM watermarking scheme, distortion-free watermarking \citep{kuditipudi2024robust}, which preserves the underlying token distribution by using a hidden watermarking key sequence. We demonstrate that, even under a more sophisticated watermarking scheme, it is possible to \emph{compromise} the LLM and carry out a \emph{spoofing} attack, \ie generate a large number of (potentially harmful) texts that can be attributed to the original watermarked LLM. Specifically, we propose using adaptive prompting and a sorting-based algorithm to accurately recover the underlying secret key for watermarking the LLM. Our empirical findings on \textrm{LLAMA-3.1-8B-Instruct}, \textrm{Mistral-7B-Instruct}, \textrm{Gemma-7b}, and \textrm{OPT-125M} challenge the current theoretical claims on the robustness and usability of the distortion-free watermarking techniques. 
\end{abstract}

\section{Introduction}
\label{sec:intro}
Recent advances in generative models have significantly improved their capabilities and applicability across various real-world domains. Notably, models like ChatGPT~\citep{chatgpt} and other large language models (LLMs) can now generate text closely resembling human-written content. However, as both businesses and individuals have rapidly adopted generative models, there is a growing concern within the research community about their potential for malicious use. To address this issue, a growing body of research around \emph{watermarking} LLM-generated text has emerged~\citep{kirchenbauer2023watermark, kuditipudi2024robust, aaronson2023openai, piet2024markwordsanalyzingevaluating, zhang2024remark, ning2024mcgmark}. The primary strategy in this line of research involves embedding a \emph{hidden} signal (\ie a secret watermark key) within the generated text, which any third party with knowledge of the key can reliably detect.

While these watermarking techniques offer reliable and robust statistical guarantees to verify LLM-generated texts, they still fall short in addressing the potential attack models posed by malicious actors \citep{jovanovic2024watermarkstealinglargelanguage, zhang2024largelanguagemodelwatermark, pang2024no, wu2024bypassingllmwatermarkscoloraware, gloaguen2024black, gloaguen2024discovering}. Previous research on LLM watermarking often focuses on robustness against common attacks, such as deletion, insertion, and substitution, to simulate the behavior of users attempting to evade content detectors. For instance, a student might slightly modify a machine-generated essay by altering a few sentences to avoid detection by their professor. However, a determined adversary could go further by reverse-engineering the watermarking scheme. By repeatedly querying the API of the watermarked LLM, they could \emph{steal} the watermark by approximating the hidden secret key. Once estimated, the most significant threat is \emph{spoofing}, where an attacker generates (potentially harmful) text that appears to be watermarked. Being able to generate large volumes of \emph{spoofed} content with minimal computational effort not only undermines the intended purpose of a watermark but is also a reputational risk for the LLM providers, whose model could have been falsely attributed to harmful or incorrect content.

Prior work on \emph{watermark stealing} primarily studies the distribution-modifying algorithm by \citet{kirchenbauer2023watermark} and its variants. In contrast, our focus is on the other prominent watermarking scheme by \citet{kuditipudi2024robust}, which is \emph{distortion-free}, \ie the watermark does not change the underlying token distribution. A significant difference between the two watermarking techniques is that \citet{kuditipudi2024robust} uses a randomized watermark key, creating a correlation between the LLM-generated text and this secret key. During detection, a third party with this secret watermark key can efficiently check for the correlation and verify whether the text is watermarked. Furthermore, prior work that attempted to break this distortion-free watermark specifically focus on the exponential minimum sampling variant (EXP). In this work, we complete the line of work on stealing \citet{kuditipudi2024robust}'s watermark by investigating the inverse transform sampling (ITS) variant (\Cref{sec:protocol}). 

Considering this approach, we propose a sorting-based algorithm to accurately estimate the secret watermark key and enable \emph{spoofing} attacks with only a few samples from the watermarked LLM. Our spoofed outputs both (i) pass the original watermark detection test, and (ii) maintain similar text quality (measured with perplexity score and cosine similarity) compared to the non-watermarked and watermarked outputs from the base LLM. Notably, we observe successful spoofing attempts even when the attacker has a partial knowledge of the secret key and permutation used for watermarking. 
\begin{figure}
     \centering
     \begin{subfigure}[b]{\textwidth}
        \centering
        \includegraphics[width=\linewidth]{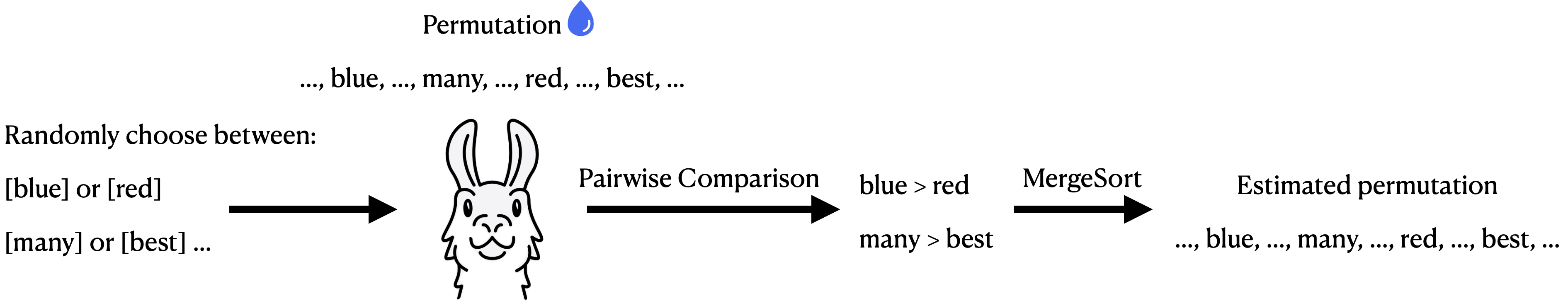}
         \label{fig:overview_1}
     \end{subfigure}
     \vfill
     \begin{subfigure}[b]{\textwidth}
         \centering
         \includegraphics[width=\linewidth]{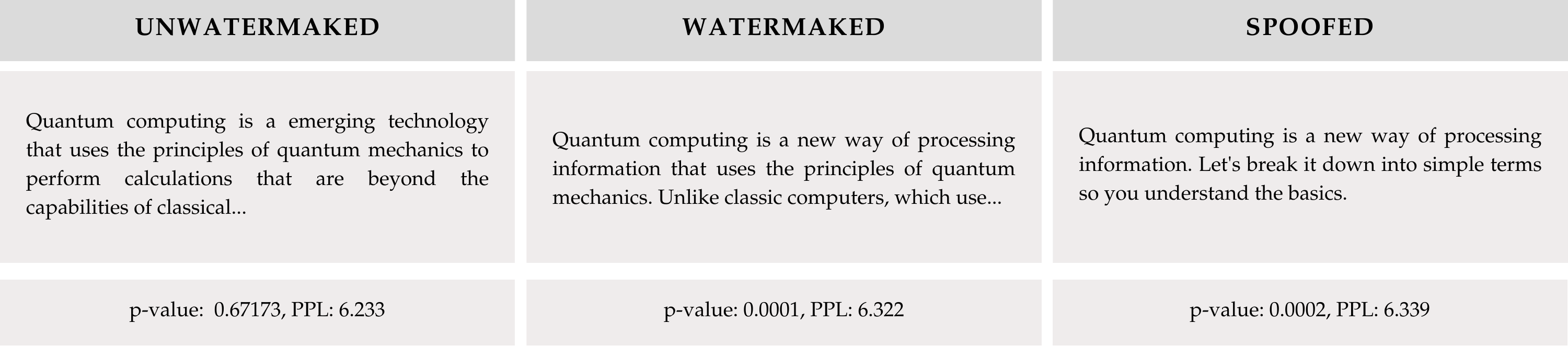}
         \label{fig:spoofed_table}
     \end{subfigure}
     \caption{Overview of our watermark stealing framework. \textbf{Top}: Illustrative example of~\Cref{alg:reverse_perm} to estimate the secret permutation over the LLM's vocabulary. \textbf{Bottom}: Outputs using LLAMA-3.1-8B-Instruct \textbf{(left)}: without watermark, \textbf{(middle)}: with watermark, \textbf{(right)}: spoofing attack. The $p$-value reflects the watermark detection confidence, and the perplexity (PPL) measures text quality.}
\end{figure}
\paragraph{Overview of Paper.} We introduce a watermark stealing framework for the distortion-free LLM watermarking technique from \citet{kuditipudi2024robust}. Specifically, in this work we focus on the watermark approach using inverse transform sampling. At a high level, the decoder used in inverse transform sampling has two main components: a secret key sequence of uniformly random variables and a permutation over the entire vocabulary. With a sequence of well-crafted prompts, our algorithm can probe the underlying watermark to accurately learn the permutation over the vocabulary used to generate watermarked text. Once the permutation is recovered, we repeatedly query the watermarked LLM and obtain an accurate estimate of the secret key sequence by iteratively narrowing down the confidence interval with each observed sample. After both components of the watermark decoder have been learned, we can successfully perform \emph{spoofing} attacks on the watermarked LLM, \ie generate texts that appear to be watermarked.
Overall, we make the following contributions:
\begin{itemize}
    \item We provide a framework that accurately estimates underlying parameters of the distortion-free watermarking algorithm \citep{kuditipudi2024robust} under different threat models. Specifically, we target the watermark algorithm based on inverse transform sampling (ITS), which uses a sequence of random secret keys and a permutation over the vocabulary to watermark the LLM. 
    \item With this secret watermark key estimation, we demonstrate a simple spoofing attack with a high success rate on four LLMs (LLAMA-3.1-8B-Instruct~\citep{grattafiori2024llama3herdmodels}, Mistral-7B-Instruct~\citep{jiang2023mistral7b}, Gemma-7b~\citep{gemmateam2024gemmaopenmodelsbased}, and OPT-125M~\citep{zhang2022opt}). In conjunction with prior work in watermark stealing, these results highlight the need for refining state-of-the-art watermarking techniques for language models.
\end{itemize}
\section{Related Work}
\label{sec:related_work}
\paragraph{LLM Watermark.} Advances in large language models have led to promising applications in various real-world domains. However, there is growing concern that language models may be misused for spreading fake news, generating harmful content, and facilitating academic dishonesty. In response, a growing body of research has proposed watermarking as a framework to reliably detect LLM-generated content and mitigate the potential for malicious use \citep{kirchenbauer2023watermark, kuditipudi2024robust, aaronson2023openai, piet2024markwordsanalyzingevaluating, zhang2024largelanguagemodelwatermark, ning2024mcgmark}. These approaches embed an `invisible' watermark in the model-generated output, which can later be identified and verified using a secret key. Existing LLM watermarking schemes share various desirable properties: (i) the watermark should be easily detected given knowledge of the secret key; (ii) the watermark should not degrade model-generated outputs; (iii) the watermark should be robust against adversarial attacks; and (iv) the watermark should not be easily stolen for spoofing or removal attacks. However, recent work on `watermark stealing' directly challenges these theoretical claims.
\paragraph{LLM Watermark Stealing.} The literature on LLM watermark and watermark stealing primarily considers two attacks: \emph{scrubbing} and \emph{spoofing}. Prior work on watermark scrubbing studies effective watermark removal by paraphrasing \citep{krishna2023paraphrasing, zhang2024watermarkssandimpossibilitystrong,jovanovic2024watermarkstealinglargelanguage,cheng2025revealing,krishna2023paraphrasing} or leveraging the LM's side information \citep{pang2024no}. 

In this work, we instead focus on the `spoofing' attack. The primary approach in this literature is first to estimate the watermarking scheme, then embed this secret key approximation into arbitrary content to generate spoofing attacks. The first work that comprehensively studied spoofing is  \citet{sadasivan2025aigeneratedtextreliablydetected}, which targets \citet{kirchenbauer2023watermark}'s distribution-modifying watermark. The authors query the watermarked LLM a million times to learn the underlying token pair distribution, then manually compose texts to spoof the watermark. Follow-up works~\citep{jovanovic2024watermarkstealinglargelanguage, gu2024learnabilitywatermarkslanguagemodels, liu2024survey} further highlight the importance of spoofing by operating in more realistic settings. Notably, \citet{jovanovic2024watermarkstealinglargelanguage} spoofs ~\citet{kirchenbauer2023watermark}'s soft watermark without access to the watermark $z$-score detector and no base (non-watermarked) responses. The work most related to ours is  \citet{pang2024no}, who provide spoofing attacks on the three most prominent LLM watermarking schemes: KGW~\citep{kirchenbauer2023watermark}, Unigram~\citep{zhao2023provablerobustwatermarkingaigenerated}, and EXP~\citep{kuditipudi2024robust}. Instead of directly estimating the watermarking scheme, the authors leverage the public detection API to enable more sample-efficient spoofing and propose a differential privacy approach to mitigate the risks of having a public detection API. Our work differs from this work in that we consider the inverse transform sampling (ITS) watermark approach in~\citet{kuditipudi2024robust} instead of the exponential minimum sampling (EXP) approach. \citet{kuditipudi2024robust} suggests the EXP watermark in practice when robustness is of higher priority and the ITS watermark in practice when detection throughput is of higher priority. Thus, our results complement their findings and complete the spoofing attacks on both watermark approaches in \citet{kuditipudi2024robust}.   
\section{Preliminary}
\label{sec:prelim}
In this section, we provide relevant background on the inverse-transform mapping-based watermarking method proposed in \citet{kuditipudi2024robust}, our notations, and a concrete set of threat models. In the following, we use $x$ to denote a sequence of tokens, $x_i \in \cV$ is the $i$-th token in the sequence, and $\cV$ is the vocabulary. For $n \in \mathbb{N}^+$, we write $[n]$ to denote the set $\{1, \cdots, n\}$.  
\subsection{Watermark Generation and Interaction Protocol}
\label{sec:protocol}
Let $p: \cV^* \rightarrow \Delta(\cV)$ be an auto-regressive \emph{language model} (LM) that maps a string of arbitrary length to a probability distribution over the vocabulary $\Delta(\cV)$. Given a prefix $x \in \cV^*$, we denote the conditional probability distribution over the next token by $p(\cdot | x)$. Let $\Xi$ represent the space of watermark key elements, and let $\xi \in \Xi^n$ be a secret key sequence. For simplicity, we assume that each element $\xi_{i \in [n]}$ of $\xi$ belongs to the interval $[0,1]$. Let $\pi: [|\cV|] \rightarrow \cV$ be a fixed bijective permutation mapping from integers to tokens, where $\pi(i)$ represents the $i$-th token in the vocabulary sorted according to the permutation $\pi$. The watermarking mechanism is based on inverse transform sampling (ITS) with the following interaction protocol:
\begin{enumerate}
    \item \textbf{Key and Mechanism Sharing}: The LM provider shares a secret key sequence $\xi = (\xi_1, \xi_2, \dots, \xi_n) \in \Xi^n$ and a bijective permutation function $\pi$ with the detector.
    
    \item \textbf{User Prompt}: The user provides a prompt $x \in \cV^*$ to the LM provider.
    
    \item \textbf{Watermarked Text Generation}: The LM provider generates watermarked text $Y \in \cV^*$ by applying the following procedure at each token generation step:
        \begin{enumerate}
            \item Compute the CDF of the LM’s probability distribution over the vocabulary: $C_k = \sum_{i=1}^k p(\pi(i) | x), \quad \forall k \in [|\cV|]$. 
            \item Retrieve the corresponding key element for the $t$-th token generation: $\zeta = \xi_{t \mod
            n}$.
            \item Identify the smallest index $k$ such that $C_k \geq \zeta$.
            \item Select the token $\pi(k)$ as the next token in $Y$.
            \item Append the selected token to the current prefix and update it: $x \gets x \oplus \pi(k)$.
            \item Repeat until the desired text length is reached or another termination condition is met.
        \end{enumerate}
\end{enumerate}

The details of the generation process are summarized in Algorithm~\ref{alg:generate} (\Cref{appendix:prelim}).

\subsection{Watermark Detection}
We adopt a methodology similar to that proposed in \citet{kuditipudi2024robust} for watermark detection, employing a permutation test to compute the $p$-value. If the test returns a small $p$-value, then the text is likely to be watermarked; otherwise, if the $p$-value is large, then the text is likely not watermarked. The details are summarized in Algorithm~\ref{alg:detection} in Appendix~\ref{appendix:prelim}.

\subsection{Attacker Model}
\paragraph{Attacker's Objective and Motivation.} Our work focuses on the `spoofing' attack, which aims to generate harmful or incorrect outputs that carry the original LM provider's watermark. For example, an attacker can `spoof' the watermarked LM, generate fake news or defamatory remarks, and post them on social media. The attacker can damage the LM provider and their model's reputation by claiming that the innocent LM provider's model generates these harmful texts.
\paragraph{Attacker's Capabilities.} We make the following assumptions on the attacker's capabilities, following prior work on watermark stealing~\citep{jovanovic2024watermarkstealinglargelanguage, pang2024no}. The attacker has black-box access to the complete generation of the watermarked LM and is aware of the watermark employed by the LM provider. The attacker aims to use a small number of queries to the watermarked LM to build an estimation of the underlying watermarking scheme parameterized by the secret key $\xi$ and permutation $\pi$. We further assume that the detection API is available to the public. The attacker can query the detection API and obtain the watermark confidence score in terms of $p$-value. This assumption allows the attacker to verify the effectiveness of their `spoofing' attack. Finally, as is common practice in prior work~\citep{Naseh_2023, ouyang2022traininglanguagemodelsfollow, pang2024no} and enabled by OpenAI's API, we assume that the top tokens at each position and their probabilities are returned to the attacker.   

\section{Methodology: Breaking Distortion-free Watermarks}
\label{sec:methodology}
In this section, we explore three different regimes for breaking the distortion-free watermark, focusing on secret key estimation and recovering the token permutation used in the watermarking process.

\paragraph{Threat Models.}
The goal of the attack is to reverse-engineer both the permutation \(\pi\) and the secret keys \(\{\xi_i\}\) by querying a LM with a variety of carefully designed prompts. We assume that (i) the model consistently employs the same secret key sequence and permutation for text generation and (ii) the adversary can interact with the model through these crafted prompt queries to extract information about its watermarking process \footnote{Note that the original watermark paper~\citep{kuditipudi2024robust} also suggests using a single random permutation in practice to reduce overhead in both watermark generation and detection.}. Concretely, we consider the following three threat models:
\begin{enumerate}
    \item \textbf{Known \(\{\xi_i\}\), Estimate \(\pi\):} In this model, the attacker knows the secret key sequence \(\{\xi_i\}\) used to generate each sample, but the permutation \(\pi\) governing the token generation is unknown. The attacker aims to recover \(\pi\) based on the observed watermarked outputs and the known $\{\xi_i\}$.
    %
    \item \textbf{Known \(\pi\), Estimate \(\{\xi_i\}\):} In this model, the permutation \(\pi\) is publicly available or otherwise known to the attacker. The attacker's goal is then to infer the secret key sequence \(\{\xi_i\}\) from the corresponding watermarked samples and the known \(\pi\). 
    \item \textbf{Both \(\{\xi_i\}\) and \(\pi\) are Unknown:} The most realistic and challenging model where the attacker has no knowledge of the secret key sequence \(\{\xi_i\}\) and the permutation \(\pi\). Under this model, the attacker must simultaneously recover both parameters solely from the observed watermarked outputs.
\end{enumerate}
\subsection{{Known secret key sequence \(\{\xi_i\}\), reverse-engineer permutation \(\pi\)}}
When the random key sequence \(\{\xi_i\}\) is known, the attacker's goal is to reconstruct the permutation map \(\pi\). In this regime, we assume that the attacker can modify the random key sequence \(\{\xi_i\}\) to perform queries for inferring the permutation. To facilitate this, we construct a dataset of (secret key, prompt) pairs \(\{ (\xi_i, x_i) \}\) as follows (details are summarized in \Cref{alg:dataset_construction}):

\begin{enumerate}
    \item \textbf{Fixed Prompt:} 
    A fixed prompt \(x_{\text{prefix}}\), \eg `Once upon a time,' is used across multiple queries to ensure the conditional probability distribution \(p(\cdot | x_{\text{prefix}})\) remains the same.
    
    \item \textbf{Multiple Queries:}
    For each query to the LM, a different random key \(\xi_i \in [0, 1]\) is used. The key \(\xi_i\) determines the CDF threshold for selecting the first token \(x_i\) in the response. 
    
    \item \textbf{Dataset Construction:}
    By repeatedly querying the language model with the fixed prompt \(x_{\text{prefix}}\) and different random keys \(\xi_i\), we obtain a collection of pairs \(\{ (\xi_i, x_i) \}\), where each \(x_i\) is the first token selected in response to \(\xi_i\).
\end{enumerate}

%
\begin{algorithm}[!t]
\caption{Dataset Construction for Reverse-Engineering the Permutation}
\label{alg:dataset_construction}
\begin{algorithmic}[1]
\renewcommand{\algorithmicrequire}{\textbf{Input:}}
\renewcommand{\algorithmicensure}{\textbf{Output:}}
\REQUIRE Fixed prompt \(x_{\text{prefix}}\), number of queries \(N\)
\ENSURE Dataset \(\mathcal{D} = \{ (\xi_i, x_i) \mid i \in [N] \} \)

\FOR{\(i = 1\) to \(N\)}
    \STATE Sample a random key \(\xi_i \sim \mathcal{U}[0,1]\).
    \STATE Query the language model with the fixed prompt \(x_{\text{prefix}}\) using \(\xi_i\).
    \STATE Receive response and extract the first token \(x_i\).
\ENDFOR
\RETURN \(\mathcal{D}\)
\end{algorithmic}
\end{algorithm}

Then, we utilize a sorting-based algorithm to reverse engineer by first sorting the observed data pairs $\{\xi_i, x_i \}$ in ascending order of $\xi_i$. Then, we record the order in which each unique token $x_i$ appears for the first time. This sequence of first occurrences defines the estimated permutation $\pi$ over the vocabulary. The detail of this sorting-based algorithm is summarized in Algorithm \ref{alg:reverse_perm}.
%
    
%

\begin{algorithm}[H]
\caption{Reverse-Engineer Secret Permutation $\pi$}
\label{alg:reverse_perm}
\begin{algorithmic}[1]
\renewcommand{\algorithmicrequire}{\textbf{Input:}}
\renewcommand{\algorithmicensure}{\textbf{Output:}}
\REQUIRE Dataset $\mathcal{D} = \{ (\xi_i, x_i) \mid i \in [N] \}$
\ENSURE Recovered permutation $\pi$
\STATE Sort $\mathcal{D}$ in ascending order by $\xi_i$ to obtain $\{(\xi_{(1)}, x_{(1)}), (\xi_{(2)}, x_{(2)}), \ldots, (\xi_{(N)}, x_{(N)})\}$.
\STATE Initialize empty list $R = \varnothing$.
\FOR{$i = 1$ to $N$}
    \IF{$R$ is empty or $x_{(i)}$ is not the last element of $R$}
        \STATE Append $x_{(i)}$ to $R$.
    \ENDIF
\ENDFOR
\STATE Set $\pi \gets R$.
\RETURN $\pi$.
\end{algorithmic}
\end{algorithm}

\subsection{Known permutation \(\pi\), reverse-engineer secret key sequence \(\{\xi_i\}\)}
\label{sec:reverse_engineer_key}

In this regime, where the permutation \(\pi\) is known to the adversary, each generated token \(x\) can be mapped to its corresponding index in the vocabulary via the inverse mapping $k = \pi^{-1}(x)$.
Given an index \(k\), the secret key \(\xi\) corresponding to the token $x$'s selection must lie within the following interval determined by the cumulative distribution function of the language model, \ie $C_{k-1} < \xi \leq C_k,$
where the $C_{k-1}$ and $C_k$ are the CDF values with respect to permutation $\pi$ defined as
%
$C_k = \sum_{j=1}^{k} p\bigl(\pi(j) \mid \text{prefix}\bigr)$
with the convention \(C_0 = 0\).

\begin{algorithm}[!t]
\caption{Reverse-Engineering Pseudorandom Secret Key from Watermarked Outputs}
\label{alg:reverse_prng}
\begin{algorithmic}[1]
\renewcommand{\algorithmicrequire}{\textbf{Input:}}
\renewcommand{\algorithmicensure}{\textbf{Output:}}
\REQUIRE 
\begin{itemize}
    \item A set of output sentences \(\mathcal{Y}\) generated by the watermarked model.
    \item Known permutation \(\pi\) (and its reverse mapping \(\pi^{-1}\)).
    \item Watermark key length \(n\) (\ie there are \(n\) pseudorandom numbers \(\xi_1, \xi_2, \dots, \xi_n\)).
\end{itemize}
\ENSURE Estimated lower bound $\LB_i$ and upper bound $\UB_i$ for each pseudorandom number \(\xi_{i: i \in [n]}\).

\STATE \textbf{Initialization:} For \(i = 1\) to \(n\), set \(\LB_i \gets 0\) and \(\UB_i \gets 1\).
\FOR{each sentence \(y \in \mathcal{Y}\)}
    \FOR{each token \(y_s\) in \(y\) (with \(s\) as the token index)}
        \STATE \(i \gets s \mod n\).
        \STATE Compute \(k \gets \pi^{-1}(y_s)\).
        \STATE Compute the cumulative probabilities $C_{k-1}$ and $C_k$ with respect to permutation $\pi$, where $C_k = \sum_{j=1}^{k} p\bigl(\pi(j) \mid \text{prefix}\bigr)$
        and \(C_0 = 0\).
        \STATE \textbf{Update bounds:} $\LB_i \gets \max(\LB_i,\, C_{k-1}), \quad \UB_i \gets \min(\UB_i,\, C_k)$.
    \ENDFOR
\ENDFOR
\RETURN \(\{(\LB_i, \UB_i)\}_{i=1}^n\)
\end{algorithmic}
\end{algorithm}

With this principle, we use Algorithm \ref{alg:reverse_prng} to estimate the secret keys when the permutation is known. When the attacker only knows a partial ordering over a subset of tokens, we could modify Algorithm~\ref{alg:reverse_prng} with the same principle to estimate the secret keys. The details are in Algorithm \ref{alg:reverse_prng_partial} (Appendix~\ref{appendix:partial}).

\subsection{Unknown secret key sequence \(\{\xi_i\}\) and permutation \(\{\pi\}\)}
\label{sec:two-token-single-step}

In general, if both the random key sequence \(\{\xi_i\}\) and the permutation \(\pi\) are unknown, it is impossible to determine them uniquely. An interesting symmetry emerges in the parameterization of the watermarking process. Specifically, replacing each \(\xi_i\) with its complement \(1 - \xi_i\) and simultaneously reversing the permutation \(\pi\) leaves the watermarking process unchanged. Formally, given a permutation \(\pi\), define its reverse counterpart \(\pi'\) as $\pi' = \text{reverse}(\pi)$. 
Then the transformation $\xi_i \to 1 - \xi_i, \pi \to \pi'$ results in the same watermarking behavior almost surely. Formally, we show in \Cref{thm: iden_main_result} that this is the only possible alternative parametrization. The proof is in \Cref{appendix:math}.
\begin{theorem}
\label{thm: iden_main_result}
Given two secret keys $\xi, \hat{\xi} \in [0,1]$, two permutations $\pi, \hat{\pi}$ over the vocabulary and a probability distribution $p \in \Delta(\cV)$. Let $S(p, \xi, \pi)$ denote the token selection function that outputs the watermarked token in \Cref{alg:generate}. Suppose the selection function satisfies $S\bigl(p,\xi,\pi\bigr) = S\bigl(p,\widehat{\xi},\widehat{\pi}\bigr)$ for almost every $p\in\Delta(\mathcal V)$.
Then exactly one of the following must hold: either (i) $\pi=\widehat{\pi}$ and $\xi=\widehat{\xi}$ or (ii) $\pi=\mathrm{reverse}(\widehat{\pi})$ and $\xi=1-\widehat{\xi}$.  
%
\end{theorem}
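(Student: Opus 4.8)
The plan is to prove the forward implication: assuming the two selection maps agree for almost every $p$, deduce that $(\widehat\xi,\widehat\pi)$ is either $(\xi,\pi)$ or $(1-\xi,\mathrm{reverse}(\pi))$; that these two parametrisations genuinely induce identical behaviour is the routine CDF computation already sketched before the theorem. Write $N=|\cV|$ and, for the ordering $\pi$, set $P_m=\{\pi(1),\dots,\pi(m)\}$ with $P_0=\varnothing$. By construction $S(p,\xi,\pi)=\pi(k)$ where $k$ is the least index with $\sum_{i\le k}p(\pi(i))\ge\xi$, so the \emph{selection cell} $R_{\pi(k)}:=\{p:S(p,\xi,\pi)=\pi(k)\}$ equals $\{p\in\Delta(\cV): p(P_{k-1})<\xi\le p(P_k)\}$, an intersection of two half-spaces with the simplex and hence \emph{convex}. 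Since two full-dimensional convex sets that agree up to a Lebesgue-null set share the same interior and closure, the hypothesis upgrades from ``a.e.\ equality of the maps'' to ``$R_t(\xi,\pi)=R_t(\widehat\xi,\widehat\pi)$ as regions, for every token $t$.'' I would first dispatch the degenerate range: if $\xi\in\{0,1\}$ the map is constant and the statement needs separate comment, so assume $\xi,\widehat\xi\in(0,1)$, under which all $N$ cells are full-dimensional and nonempty.

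The core step is to recover the ordering up to reversal from the common cell decomposition. I would characterise the prefixes intrinsically: for a token-set $A$, the region $\{p:S(p,\xi,\pi)\in A\}=\bigcup_{t\in A}R_t$ is convex iff $A$ is a contiguous block $\{\pi(a),\dots,\pi(b)\}$ of the $\pi$-order (a gap in the index set forces a non-convex union), in which case the region is $\{p(P_{a-1})<\xi\le p(P_b)\}$. Such a block-region is bounded by a \emph{single} hyperplane inside $\Delta(\cV)$ exactly when one of the two constraints is vacuous, i.e.\ $a=1$ (a prefix, region $\{p(A)\ge\xi\}$) or $b=N$ (a suffix, region $\{p(A)>1-\xi\}$). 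Hence the family $\mathcal F:=\{A:\{p:S\in A\}\text{ is a single half-space}\}$ equals $\{\text{prefixes of }\pi\}\cup\{\text{suffixes of }\pi\}$; these two chains meet only at $\varnothing$ and $\cV$, so $\mathcal F$ has exactly two maximal inclusion-chains, which read off precisely the orderings $\pi$ and $\mathrm{reverse}(\pi)$. Since $\mathcal F$ is determined by the selection map alone, it is identical for both parametrisations, giving $\widehat\pi\in\{\pi,\mathrm{reverse}(\pi)\}$.

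It then remains to pin down the key. If $\widehat\pi=\pi$, equating the first-token cells $\{p(\pi(1))\ge\xi\}=\{p(\pi(1))\ge\widehat\xi\}$ forces $\widehat\xi=\xi$, giving case (i). If $\widehat\pi=\mathrm{reverse}(\pi)$ then $\widehat\pi(1)=\pi(N)$, whose cell is $\{p(\pi(N))>1-\xi\}$ as computed under $(\xi,\pi)$ but $\{p(\pi(N))\ge\widehat\xi\}$ under $(\widehat\xi,\widehat\pi)$; equating the two half-spaces forces $\widehat\xi=1-\xi$, giving case (ii). Finally, for $N\ge2$ a bijection cannot satisfy $\pi(1)=\pi(N)$, so $\mathrm{reverse}(\pi)\ne\pi$, the two cases are mutually exclusive, and the ``exactly one'' conclusion follows.

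I expect the main obstacle to be the rigorous bookkeeping around the measure-zero boundaries together with the converse half of the prefix characterisation---namely proving that \emph{only} contiguous blocks give convex selection-regions and that \emph{only} prefixes and suffixes give single-hyperplane regions---since this is exactly where the reversal ambiguity (the palindromic symmetry flagged before the theorem) must be shown to be the sole source of non-uniqueness. The convexity-from-a.e.-equality reduction and the exclusion of the degenerate endpoints $\xi\in\{0,1\}$ are the remaining technical points that require care.
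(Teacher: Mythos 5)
Your proposal is correct in substance but takes a genuinely different route from the paper. The paper factors the theorem through two lemmas proved by evaluating $S$ along explicit one-parameter families of distributions: a three-token family $p(a)=x$, $p(b)=p(c)=(1-x)/2$ showing that unless $\widehat{\pi}\in\{\pi,\mathrm{reverse}(\pi)\}$ some token's \emph{betweenness} is violated (token $b$ must appear second in the progression under one permutation but never under the other), a two-token family $p(a)=x$, $p(b)=1-x$ pinning the key once the permutation is fixed, and the reversal identity $S(p,\widehat{\xi},\widehat{\pi})=S(p,1-\widehat{\xi},\mathrm{reverse}(\widehat{\pi}))$ to reduce case (ii) to case (i). You instead argue geometrically: selection cells are convex polyhedral slabs $\{p(P_{k-1})<\xi\le p(P_k)\}$, a.e.\ equality upgrades to equality of cells up to boundaries, and the token-sets whose preimage is a single half-space are exactly the prefixes and suffixes of $\pi$, whose two maximal inclusion chains recover $\pi$ up to reversal; first-token cells then fix the key. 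Your route buys rigour at two points the paper glosses over: the paper's lemmas test $S$ on one-dimensional curves in $\Delta(\mathcal{V})$, which are null sets, so the a.e.\ hypothesis does not literally license those evaluations without precisely the convex-sets-equal-a.e.\ upgrade you supply; and you correctly isolate the degenerate keys $\xi\in\{0,1\}$, where the map is (a.e.) constant and the statement as written actually fails for $|\mathcal{V}|\ge 3$, as well as checking the mutual exclusivity of (i) and (ii), which the paper asserts but never argues. The paper's approach is in exchange shorter and sidesteps the converse steps you flag as the main obstacle; those converses do hold and close routinely (for a gap $i<j<k$ with $\pi(j)\notin A$, take $p_1$ with mass $\xi$ on $\pi(i)$ and $1-\xi$ on $\pi(j)$, and $p_2$ with mass $\eta\in[\max(0,2\xi-1),\xi)$ on $\pi(j)$ and $1-\eta$ on $\pi(k)$: the midpoint selects the excluded token $\pi(j)$, so non-contiguous unions are non-convex, while interior contiguous blocks are bounded by two distinct hyperplanes meeting the open simplex and so are not half-spaces), hence your plan completes to a full proof.
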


We propose a method to learn \emph{one} of these two equivalent parameterizations.
 Our approach relies on constructing queries that force the LM to choose randomly between two candidate tokens, allowing us to record the relative order between any pair of tokens. With these pairwise ordering, a comparison-based sorting algorithm can recover the global ordering \(\pi\) (or its reverse) in \(O(|\cV| \log |\cV|)\) queries.

Specifically, we define a query interface, \texttt{QueryLLM(a, b)}, with two tokens \(a\) and \(b\) as input. With carefully designed prompts similar to~\citet{chen2024mark}, this interface ensures that the model considers only the two candidate tokens and assigns them equal probabilities. Under the ITS mechanism watermark, the model’s selection between \(a\) and \(b\) is governed only by the hidden permutation \(\pi\) and the corresponding secret key.

In our scheme, if the model outputs token \(b\), we interpret this as \(a < b\); otherwise, we interpret it as \(b < a\). With this ordering definition, we apply a comparison-based sorting algorithm (\eg Merge Sort) to recover an ordered sequence of tokens. The resulting order will correspond either to \(\pi\) or its reverse, which are equally useful. This result reduces the problem to the case where \(\pi\) is known. The algorithm details and analysis are summarized in Algorithm \ref{alg:recover_ordering} (Appendix~\ref{appendix:merge_sort}).

\section{Experimental Evaluation}
\label{sec:experiment}
To validate the effectiveness of our watermark spoofing methodology, we conduct a series of experiments under the most challenging threat model where the attacker has no prior knowledge of the watermark’s secret permutation  \(\pi\) or secret key \(\{\xi\}\) (section~\ref{sec:two-token-single-step}  and Appendix \ref{appendix:merge_sort}). After inferring the permuted vocabulary, the problem reduces to the case where the permutation is known (or partially known), but the secret key still needs to be estimated (section~\ref{sec:reverse_engineer_key} and Appendix \cref{appendix:partial}). Our evaluation addresses three key questions: \textbf{(Q1)} Can an attacker successfully carry out spoofing attacks on the \citet{kuditipudi2024robust}'s watermark? \textbf{(Q2)} Do the different LMs affect the spoofing result? and \textbf{(Q3)} How much computational resources does the attacker need to spoof successfully? 

We test our approach on four different LLMs: LLAMA-3.1-8B-Instruct~\citep{meta_llama_3_1_8b_instruct}, Mistral-7B-Instruct~\citep{jiang2023mistral7b}, Gemma-7b~\citep{gemmateam2024gemmaopenmodelsbased}, and OPT-125M~\citep{zhang2022opt}– using the OpenGen benchmark prompts~\citep{krishna2023paraphrasing}. We employ three evaluation metrics: the watermark detection $p$-value from \Cref{alg:detection}, cosine similarity using nomic embed models~\cite{nussbaum2025trainingsparsemixtureexperts} to calculate semantic similarity, and the output perplexity (PPL) to measure text fluency and quality. A successful spoofing attack should generate texts with $p$-value less than $\alpha=0.05$ (\Cref{appendix:prelim}), PPL on par with ordinary model outputs and cosine similarity score close to 1~\citep{7577578}. For additional experimental detail and results, see \Cref{appendix:experiment}.

\paragraph{(Q1) Successful spoofing attacks on \citet{kuditipudi2024robust}'s watermark.}
For the first research question, we answer positively: our proposed algorithms can successfully spoof the inverse-transform-sampling based watermark by \citet{kuditipudi2024robust}. 
First, we evaluate the spoofing results from the estimated permutation \(\pi\) and secret-key \(\xi_i\). We follow the procedure outlined in~\Cref{sec:methodology} to generate new text samples using the recovered permutation and secret key values (or subsets in the partial order scenarios). Our experimental setup compares three types of generated text: \textbf{(1)} genuine watermarked text produced by the watermarked LLM (using the true secret key and permutation) – this serves as a baseline for expected detector behavior and text quality, \textbf{(2)} non-watermarked text from the same model (without watermarking mechanism) – this serves as a control for typical LLM output, and \textbf{(3)} \textbf{spoofed text} produced by our attack (using the recovered permutation and secret key). We generate 100 samples in each category with 50 tokens per sample. We then apply the watermark detection and compute the perplexity of each sample using the base LlaMA-3.1-8B model to generate the perplexity scores. In addition, we use the Nomic Embed Model~\citep{nussbaum2025trainingsparsemixtureexperts} to generate embeddings to calculate the cosine similarity between spoofed text and the non-watermarked text as well as between the genuine watermarked text and the non-watermarked text. The results of this evaluation are summarized in \Cref{tab:median-results} and discussed below. Additional results are in \Cref{fig:pval_boxplots_threshold}, and~\Cref{appendix:experiment}. 

\Cref{tab:median-results} shows the median $p$-values for 100 samples of each generation type across all language models. Watermarked (WM) text yields low $p$-values ($p< 0.05$), and non-watermarked (Non-WM) text shows high values ($p> 0.05$), as expected.  Across all LLMs used in the experiment, with partial knowledge $\pi = 50\%$, majority of spoofed samples remain below the detection threshold, \ie the median $p$-values remain highly significant. 
We see some outliers with high $p$-values, especially in our larger models, bringing the mean $p$-value up, as seen in \Cref{tab:mean-results-threshold}. Across all models, detection starts to fail at $\pi = 25\%$ with many spoofed samples detected as non-watermarked. These results demonstrate that our method reliably fools the detector at moderate-to-high permutation recovery levels (\(50\%\)–\(100\%\)) across all tested models.

In addition to the detection test, we examine the quality of the spoofed text to ensure that our attack does not significantly degrade the coherence of the generated content. Concretely, we evaluate the text quality using two metrics: Perplexity (PPL) and Cosine Similarity. We compute perplexity using a separate reference language model (LLaMA 3.1 8B) that is different from the generation model. We decode each output using the generation model's tokenizer to preserve lexical fidelity, then re-tokenize the resulting text with the perplexity model's tokenizer for scoring. To mitigate distortion from rare token artifacts or malformed completions, we filter the samples with perplexity values that exceed the 95th percentile within each batch (similar to how~\citet{jovanovic2024watermarkstealinglargelanguage} define their attack success metric). For completion, we also report the results using unfiltered samples in \Cref{tab:mean-results-nothreshold} and \Cref{tab:median-results-nothreshold} in \Cref{appendix:experiment}. Cosine similarity measures how similar two vectors are regardless of their magnitude~\citep{Steck_2024}. We generate learned embeddings using the pretrained nomic embed language model ~\citep{nussbaum2025trainingsparsemixtureexperts} to turn each generated text into a vector which captures the meaning of the sentences. We then calculate the cosine similarity of the non-watermarked text vectors compared to the watermarked text vectors and spoofed text vectors which tells us how close the vectors point in the same directions. \Cref{tab:median-results} presents these statistics for the same sets of samples, where low perplexity and high cosine similarity indicates more fluent text. 

In \Cref{tab:median-results} and \Cref{fig:pval_boxplots_threshold}, we observe that the spoofed texts (Spoof 50 and Spoof 25) have almost the same perplexity distribution as the watermarked text (WM). Our spoofed texts remain fluent, especially when the watermark key recovery is mostly accurate. With a fully known permutation, the spoofed text’s perplexity is only slightly higher on average than genuine watermarked outputs across all LLMs. With these metrics as viable proxies for human preference~\citep{zheng2023judgingllmasajudgemtbenchchatbot, chiang2023largelanguagemodelsalternative}, we can confidently conclude that humans would be unlikely to notice any difference in quality, and the outputs are coherent and on-topic given the prompts. In \Cref{appendix:experiment}, we provide additional examples of texts generated by our spoofing attack, the watermarked output by the watermarked LLM and the base non-watermarked output using the same prompt. 
%


\begin{table}
\caption{Comparison of baseline and spoofed outputs with various LMs. We report the median $p$-value (p-val) for detection test, and a combination of perplexity (PPL) and cosine similarity (co-sim) for the text quality assessment.}
\label{tab:median-results}
\centering
\begin{tabular}{lccccc}
\toprule
\textbf{Model} & \multicolumn{2}{c}{\textbf{Non-WM}} & \multicolumn{3}{c}{\textbf{WM}} \\ \cmidrule(r){2-3} \cmidrule(r){4-6}
 & p-val & PPL & p-val & PPL & co-sim \\ \midrule
LLaMA-3.1-8B & 0.5 & 5.13 & 1.0e-04 & 16.32 & 0.871 \\
Mistral-7B & 0.4 & 8.91 & 1.0e-04 & 51.49 & 0.861 \\
Gemma-7B & 0.5 & 9.89 & 1.0e-04 & 59.02 & 0.85 \\
OPT-125M & 0.5 & 8.77 & 1.0e-04 & 132.46 & 0.836 \\
\bottomrule
\end{tabular}
\vspace{2mm}
\begin{tabular}{lcccccc}
\toprule
\textbf{Model}  & \multicolumn{3}{c}{\textbf{Spoof 50}} & \multicolumn{3}{c}{\textbf{Spoof 25}} \\ \cmidrule(r){2-4} \cmidrule(r){5-7} 
 & p-val & PPL & co-sim & p-val & PPL & co-sim \\ \midrule
LLaMA-3.1-8B & 1.0e-04 & 26.51 & 0.871 & 5.5e-04 & 24.09 & 0.859 \\
Mistral-7B & 1.0e-04 & 56.26 & 0.864 & 1.0e-04 & 68.48 & 0.863 \\
Gemma-7B & 1.0e-04 & 81.41 & 0.837 & 1.0e-04 & 86.35 & 0.829 \\
OPT-125M & 1.0e-04 & 130.5 & 0.834 & 1.0e-04 & 133.69 & 0.845 \\
\bottomrule
\end{tabular}
\end{table}

\paragraph{(Q2) Larger models impact spoofing quality.}
In our results, we observe that Gemma-7B underperforms relative to other models in both spoofing detection success and output perplexity. This behavior may be due to the LLM's more diffuse token distribution and less consistent top-k token ranking, which impair permutation recovery and increase alignment cost under the ITS watermark detection metric. In contrast, LLaMA-3.1-8B-Instruct shows strong spoofing performance due to its highly consistent token ranking, low-entropy output distributions, and robust response to prompt-based token comparisons. These traits make it easier to recover the watermark permutation and generate spoofed text that passes detection with low perplexity. We found that smaller model like OPT-125M achieve strong spoofing performance, with spoofed outputs consistently passing the watermark detection test. We attribute this to their flatter output distributions, which make the ITS watermarking more tolerant to approximation errors in the spoofing parameters. We do see these models produce text with higher perplexity, reflecting their limited language fluency and smaller effective vocabulary. While spoofing is statistically easier, the generated text lacks the quality of outputs from larger LLMs. 
\paragraph{(Q3) Query-efficient spoofing attack compared to prior work.}
An advantage of our framework is its sample efficiency. By directly querying the model with well-crafted comparisons and leveraging the theoretical structure of the ITS watermarking scheme, we require a low number of queries to steal the watermark. 
Although the exact number of queries depends on the vocabulary size and the model’s consistency, it is typically in order of only tens of thousands for a full reconstruction. All of our attacks were carried out with a feasible amount of computation on open-source models. This improved efficiency translates to real-world cost savings for the adversary. If the target model is accessed via a paid API, an attacker’s job requiring a million queries could be cost-prohibitive~\citep{sadasivan2025aigeneratedtextreliablydetected}. These results show that stealing a distortion-free watermark can be more cost-effective, \ie an expert adversary can compromise the watermark with limited resources. 

For comparison, \citet{jovanovic2024watermarkstealinglargelanguage} reported using 30k queries with 800k tokens to learn token distribution statistics for breaking a watermark, while our permutation recovery requires significantly fewer queries (<100) and tokens (<50). Our proposed approach translates to a significant reduction in attack cost and time.

Our results in \Cref{tab:median-results} and \Cref{fig:pval_boxplots_threshold} show that perfect recovery of the permutation is not needed to carry out a spoofing attack - a partially correct permutation can suffice. We verify this by evaluating the attacker’s success rate when only some of the true permutation is known. Even with only approximately half of the permutation recovered, detecting and spoofing the watermarked text with high success rate and minimal PPL penalty is possible across all model types. As permutation knowledge increases, the attack is more effective. 

Our results show that even with moderate permutation recovery (\eg 50\% of tokens), the secret key estimation error remains low and enables successful spoofing. Overall, we demonstrate that an attacker can recover both components of the distortion-free watermark. With a sufficiently accurate estimate of \(\pi\) and \(\{\xi\}\)– achieved with a feasible number of queries – the attacker is then able to generate spoofed text.

In summary, our experimental evaluation demonstrates that an attacker can (i) \textbf{reverse-engineer} the secret watermark parameters of a distortion-free watermarked LLM and (ii) use these to \textbf{generate a large quantity of spoofed text} that \textbf{fools the watermark detector} and remains high-quality. These findings directly challenge recent theoretical claims about the robustness of distortion-free LLM watermarks. In the next section, we conclude with a discussion of the implications of our work and potential directions for future research.

\begin{figure}[!t]
    \centering
    \includegraphics[width=1\linewidth]{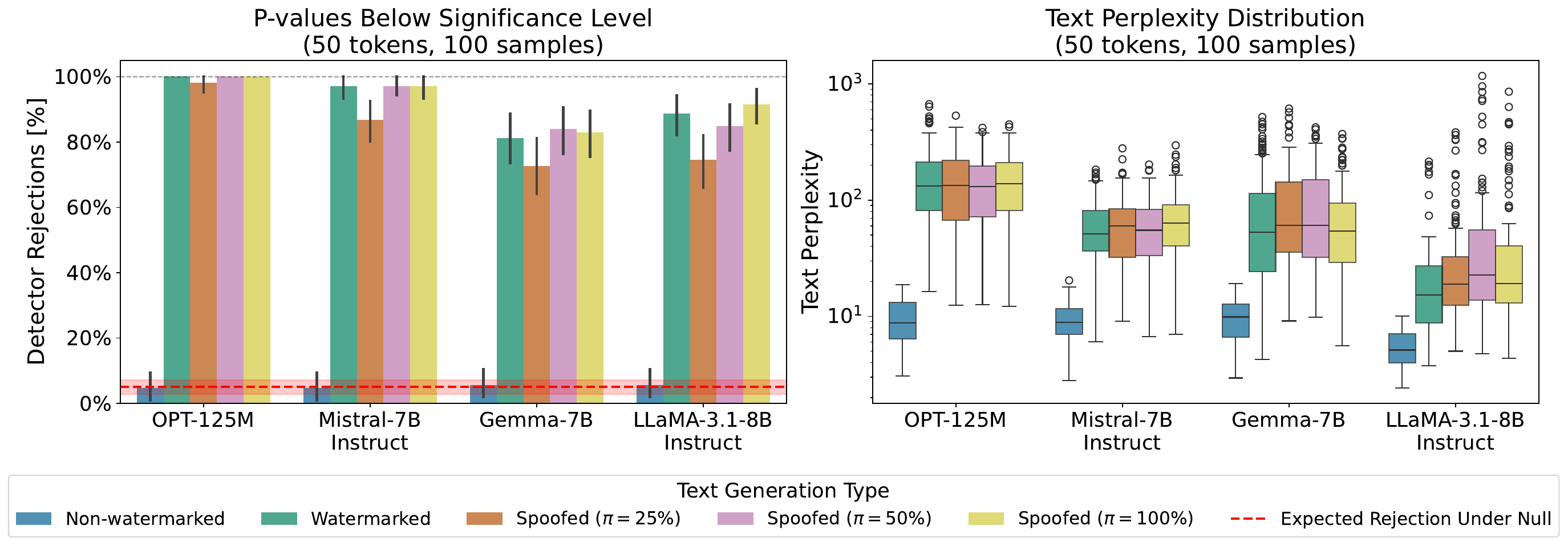}
    \caption{\textit{Left}: Percentage of $p$-values below the significance level ($\alpha=0.05$) for watermark detection in Algorithm~\ref{alg:detection}, across LLMs and known permutation $\pi$ proportions. Our spoofing attacks achieve similar rejection rates as watermarked samples from each model. \textit{Right:} Distribution of text perplexity for the generated samples, across LLMs and known permutation $\pi$ proportions. Our spoofed attacks do not exhibit significant difference in perplexity distributions with respect to the watermarked text.}
    \label{fig:pval_boxplots_threshold}
\end{figure}

\section{Conclusion and Future Work}
\label{sec:conclusion}
This paper presents a novel attack that breaks the distortion-free watermarking scheme for LLMs by using adaptive prompting and a sorting-based query strategy to retrieve the secret vocabulary permutation and key sequence. This allows us to generate `spoofed' high-quality text that we show to be statistically indistinguishable from genuine watermarked text. Our attack succeeds even when neither the watermark key nor the permutation is known ahead of time, with even a partial recovery of the permutation being sufficient. Our results challenge the security of distortion-free watermarks and complement existing watermark attacks, showing that the alternative ITS watermark scheme is also vulnerable and urging caution in using LLM watermarks as a line of defense. Future works include (i) developing watermarking schemes that are resilient to key recovery, such as dynamic key rotation or introducing randomness during generation, (ii) extending testing attacks under stricter constraints (\eg limited queries or partial detector access), (iii) scaling to larger models and (iv) exploring how spoofing interacts with other provenance mechanisms like model fingerprinting \cite{xu2019modeling}. 

\paragraph{Disclaimer.}
This paper was prepared for informational purposes by the Artificial Intelligence Research group of JPMorgan Chase \& Co. and its affiliates ("JP Morgan'') and is not a product of the Research Department of JP Morgan. JP Morgan makes no representation and warranty whatsoever and disclaims all liability, for the completeness, accuracy or reliability of the information contained herein. This document is not intended as investment research or investment advice, or a recommendation, offer or solicitation for the purchase or sale of any security, financial instrument, financial product or service, or to be used in any way for evaluating the merits of participating in any transaction, and shall not constitute a solicitation under any jurisdiction or to any person, if such solicitation under such jurisdiction or to such person would be unlawful.


\newpage
 \appendix
 \section{Identifiability of the reverse-Transform-Based Watermarking Scheme}
\label{appendix:math}
In this section, we establish an identifiability result for the reverse-transform-based watermarking scheme. Specifically, we address whether it is possible to uniquely recover both the secret key $\xi \in [0,1]$ and the permutation $\pi$, given outputs from queries to the language model. Formally, we define a selection function $S$ as follows: given a probability distribution $p \in \Delta(\mathcal{V})$, a secret key $\xi \in [0,1]$, and a permutation $\pi$, the output $S(p, \xi, \pi)$ denotes the token generated according to the procedure described in Algorithm~\ref{alg:generate}.

\begin{lemma}
\label{lem: iden_permutation}
Suppose the selection function satisfies 
\[
S(p,\xi,\pi) = S(p,\widehat{\xi},\widehat{\pi}),\quad \text\quad \text{for p almost everywhere on} \ \Delta(\mathcal{V}),
\]
where $\xi,\widehat{\xi}\in[0,1]$ are secret keys, and $\pi,\widehat{\pi}$ are permutations over the vocabulary.  
Then, it must hold that either $\pi = \widehat{\pi}$ or $\pi$ is the reverse permutation of $\widehat{\pi}$.
\end{lemma}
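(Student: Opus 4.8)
The plan is to translate the almost-everywhere equality of the two selection functions into a purely combinatorial statement about partitions of the simplex, and then read the permutation off the geometry of that partition. Throughout I take $\xi,\widehat\xi\in(0,1)$; the boundary values $0,1$ make $S$ constant and are excluded by the uniform key model, and I would flag this separately since the claim genuinely fails there. For a fixed scheme $(\xi,\pi)$ and a token $v=\pi(k)$, the set $R_v=\{p:S(p,\xi,\pi)=v\}$ is exactly $\{p:C_{k-1}<\xi\le C_k\}$ with $C_j=\sum_{i\le j}p(\pi(i))$. Since these are linear inequalities in $p$, each $R_v$ is a full-dimensional polytope inside $\Delta(\cV)$, and one checks every cell is nonempty for any $\xi\in(0,1)$. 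Hence the $R_v$ form a genuine partition of the simplex into $|\cV|$ full-dimensional cells, and likewise for $(\widehat\xi,\widehat\pi)$.

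First I would upgrade the hypothesis from ``a.e.'' to ``everywhere on interiors''. Because $S=\widehat S$ for almost every $p$, the labelled cells agree up to a null set: $R_v=\widehat R_v$ modulo measure zero for every token $v$, where the matching is by token label. Two full-dimensional polytopes that coincide up to a set of measure zero share the same interior, hence $\overline{R_v}=\overline{\widehat R_v}$ for all $v$, so the two partitions are literally identical.

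The core step is to extract $\pi$ from the partition through its facet-adjacency graph $G(\xi,\pi)$: place an edge between tokens $v,w$ when $\overline{R_v}$ and $\overline{R_w}$ meet along a common facet of dimension $|\cV|-1$. I claim $G(\xi,\pi)$ is exactly the path $\pi(1)-\pi(2)-\cdots-\pi(|\cV|)$. Indeed, the cell $R_{\pi(k)}$ is cut out by the two hyperplanes $\{C_{k-1}=\xi\}$ and $\{C_k=\xi\}$; crossing $\{C_k=\xi\}$ switches the selected token between $\pi(k)$ and $\pi(k+1)$ (and this hyperplane carries positive $(|\cV|-1)$-dimensional Lebesgue measure inside the simplex), while crossing $\{C_{k-1}=\xi\}$ switches to $\pi(k-1)$. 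So the only internal facets of $R_{\pi(k)}$ are shared with $\pi(k\pm1)$, giving precisely the path adjacencies and nothing more; the endpoints $\pi(1),\pi(|\cV|)$ are exactly the degree-one vertices. Since the two partitions are identical, $G(\xi,\pi)=G(\widehat\xi,\widehat\pi)$ as labelled graphs.

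Finally I would invoke the rigidity of a path: a labelled path graph determines its vertex sequence up to reversal, because its two endpoints are the degree-one vertices and, once an endpoint is fixed, each successive vertex is the unique unvisited neighbour. Applying this to $G(\xi,\pi)=G(\widehat\xi,\widehat\pi)$ forces $(\pi(1),\dots,\pi(|\cV|))$ to equal $(\widehat\pi(1),\dots,\widehat\pi(|\cV|))$ or its reverse, i.e.\ $\pi=\widehat\pi$ or $\pi=\mathrm{reverse}(\widehat\pi)$, which is the claim. I expect the main obstacle to be the measure-theoretic bookkeeping in the adjacency step: I must verify that the null set of boundary distributions, and of degenerate $p$ with some $p(v)=0$ where cells could touch spuriously, does not create extra edges. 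I would handle this by defining adjacency only through full-dimensional facets and restricting attention to the open cells, where the path structure is clean.
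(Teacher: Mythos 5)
Your proof is correct, and it takes a genuinely different route from the paper's. The paper argues by contradiction: if $\pi$ and $\widehat{\pi}$ are neither equal nor mutual reverses, it extracts a triple $a,b,c$ with $b$ between $a$ and $c$ in one permutation but not the other, and then uses the one-parameter family $p(x)$ (mass $x$ on $a$ and $\tfrac{1}{2}(1-x)$ on each of $b,c$) to produce distributions on which the two selection functions must disagree. You instead read the permutation off the geometry of the decision regions: each scheme partitions $\Delta(\mathcal{V})$ into $|\mathcal{V}|$ full-dimensional convex cells $\{C_{k-1}<\xi\le C_k\}$, the a.e.\ equality forces the labelled closed cells to coincide, and the facet-adjacency graph of this partition is exactly the labelled path $\pi(1)-\pi(2)-\cdots-\pi(|\mathcal{V}|)$, which determines the vertex sequence up to reversal. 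Your route buys two real improvements in rigour. First, it handles the \emph{almost everywhere} hypothesis cleanly: the paper's one-parameter family $\{p(x)\}$ is a null subset of $\Delta(\mathcal{V})$, so invoking the a.e.\ hypothesis along that curve requires an unstated perturbation or Fubini step, which your upgrade from a.e.\ agreement to equality of closures of convex full-dimensional cells replaces. Second, your observation that the claim genuinely fails at boundary keys is accurate and identifies an implicit assumption the paper never states: for $\xi=\widehat{\xi}=0$ one has $S(p,0,\pi)=\pi(1)$ for every $p$, so any two permutations sharing a first token give identical selection functions without being equal or reverses (and symmetrically at $\xi=1$); the lemma as printed, with keys in $[0,1]$, is therefore false on this measure-zero set of keys. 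What the paper's route buys in exchange is brevity and freedom from polytope machinery. One small slip to fix in your write-up: since $\Delta(\mathcal{V})$ has affine dimension $|\mathcal{V}|-1$, the shared facets have dimension $|\mathcal{V}|-2$ and carry positive $(|\mathcal{V}|-2)$-dimensional measure, not $(|\mathcal{V}|-1)$; this off-by-one is harmless, because your key separation still holds — non-consecutive cells $R_{\pi(j)}$ and $R_{\pi(k)}$, $j<k-1$, meet only where $C_j=\xi$ and $p(\pi(i))=0$ for all $j<i<k$, a set of dimension at most $|\mathcal{V}|-3$, so no spurious edges arise.
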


\begin{proof}

We prove by contradiction. Assume there exist two
permutations $\pi^{1}$ and $\pi^{2}$ that are neither identical nor mutual
reverses, such that 
$S(p,\xi,\pi^1)\equiv S(p,\widehat{\xi},\pi^2)$ for some secret keys $\xi$ and $\widehat{\xi}$.

Then by mathematical induction, we show that there exists tokens
$a,b,c$ such that \(b\) lies between \(a\) and \(c\) in~$\pi^{1}$ but not in
$\pi^{2}$.

Now consider a probability distribution $p(x)$ parametrized by $x$ over these tokens defined as follows:
\begin{itemize}
    \item Token $a$ is assigned probability $x$,
    \item Token $b$ is assigned probability $0.5(1 - x)$,
    \item Token $c$ is assigned probability $0.5(1 - x)$,
    \item All other tokens have probability zero.
\end{itemize}

As \(x\) increases from \(0\) to \(1\), the sampled tokens under permutation \(\pi^{1}\), $S(p(x),\xi,\pi^1)$ emerge in an ordered progression, sometimes involving only two of \(\{a,b,c\}\), sometimes all three, depending on the 
secret key $\xi$,
but in every case \(b\) invariably appears as the second token in that sequence.

However, under permutation $\pi^2$, since token $b$ is not positioned between tokens $a$ and $c$, the sequential appearance of these three tokens as $x$ changes will differ substantially. In fact, token $b$ will never appear as the second token shown in $S(p(x),\widehat{\xi},\pi^2)$, no matter what the $\widehat{\xi}$ is. This contradicts with the assumption $S(p,\xi,\pi^1)\equiv S(p,\widehat{\xi},\pi^2)$.
\end{proof}
\begin{lemma}
\label{lem: iden_secret_key}
Suppose the selection function satisfies 
\[
S(p,\xi,\pi) = S(p,\widehat{\xi},{\pi}),\quad \text{for p almost everywhere on} \ \Delta(\mathcal{V}),
\]
where $\xi,\widehat{\xi}\in[0,1]$ are secret keys. Then, it must hold that $\xi=\widehat{\xi}$.
\end{lemma}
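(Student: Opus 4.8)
The plan is to show that if two secret keys $\xi$ and $\widehat\xi$ induce the same token-selection function under a \emph{fixed} permutation $\pi$, they must coincide. The key observation is that for a fixed $\pi$, the selection function $S(p,\xi,\pi)$ depends on $\xi$ only through which CDF interval $\xi$ falls into: the output is $\pi(k)$ precisely when $C_{k-1}<\xi\le C_k$, where $C_k=\sum_{j=1}^k p(\pi(j)\mid\text{prefix})$ are the cumulative probabilities induced by $p$ and $\pi$. Thus the selection function is a step function of $\xi$ whose breakpoints are the values $C_1,\dots,C_{|\cV|-1}$, and these breakpoints can be placed essentially anywhere in $(0,1)$ by choosing $p$ appropriately.

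First I would reduce to a single threshold. Suppose for contradiction that $\xi\neq\widehat\xi$; without loss of generality assume $\xi<\widehat\xi$. The goal is to exhibit a distribution $p$ for which $S(p,\xi,\pi)\neq S(p,\widehat\xi,\pi)$, contradicting the hypothesis that they agree for almost every $p$. To do this I would pick any threshold value $\tau$ strictly between $\xi$ and $\widehat\xi$, i.e.\ $\xi<\tau<\widehat\xi$, and construct a two-token distribution concentrated on $\pi(1)$ and $\pi(2)$: set $p(\pi(1)\mid\text{prefix})=\tau$ and $p(\pi(2)\mid\text{prefix})=1-\tau$, with all other tokens given probability zero. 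Then $C_1=\tau$, so the smallest index $k$ with $C_k\ge\xi$ is $k=1$ (since $\xi<\tau=C_1$), giving $S(p,\xi,\pi)=\pi(1)$, whereas for $\widehat\xi>\tau=C_1$ the smallest such index is $k=2$, giving $S(p,\widehat\xi,\pi)=\pi(2)$. Since $\pi$ is a bijection, $\pi(1)\neq\pi(2)$, so the two outputs differ at this $p$.

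The remaining subtlety is the \enquote{almost everywhere} quantifier: a single distribution $p$ is a measure-zero point in $\Delta(\cV)$, so exhibiting one bad $p$ does not immediately contradict agreement a.e. To fix this I would note that the construction is robust: for any $\tau$ with $\xi<\tau<\widehat\xi$ the argument goes through, and more generally I can perturb $p$ slightly — keeping $C_1$ strictly between $\xi$ and $\widehat\xi$ and spreading the remaining mass over the other tokens in a small neighbourhood — so that the disagreement $S(p,\xi,\pi)\neq S(p,\widehat\xi,\pi)$ persists on an open set of distributions of positive Lebesgue measure in $\Delta(\cV)$. This contradicts the assumption that the two selection functions agree almost everywhere, and hence forces $\xi=\widehat\xi$.

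I expect the main obstacle to be handling this measure-theoretic quantifier cleanly rather than the core interval argument, which is elementary. The crucial point is to argue that the set of $p$ on which the two keys disagree is not negligible; establishing that it contains an open set (equivalently, that the strict inequalities $C_{k-1}<\xi$ and $\widehat\xi\le C_k$ are stable under small perturbations of $p$) is what makes the almost-everywhere hypothesis genuinely bite. Once that stability is in place, the contradiction is immediate and the lemma follows.
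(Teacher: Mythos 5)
Your proof is correct and rests on the same core construction as the paper's: restrict attention to distributions supported on $\pi(1)$ and $\pi(2)$, place the CDF breakpoint $C_1$ strictly between the two keys, and observe that the smaller key selects $\pi(1)$ while the larger selects $\pi(2)$. The one genuine difference is your treatment of the almost-everywhere quantifier, and here your version is actually more careful than the paper's. The paper takes the one-parameter family $p_x(\pi(1))=x$, $p_x(\pi(2))=1-x$ and notes disagreement for every $x$ in the open interval between $\widehat{\xi}$ and $\xi$; but for $|\cV|>2$ this family is a one-dimensional segment lying on a face of the simplex, hence a null set in $\Delta(\cV)$, so strictly speaking it does not by itself contradict agreement for almost every $p$. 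Your perturbation step --- observing that the disagreement persists on the open slab $\{p \in \Delta(\cV) : \xi < C_1 < \widehat{\xi}\}$ (with WLOG $\xi<\widehat{\xi}$), since $C_1>\xi$ forces output $\pi(1)$ under $\xi$ while $C_1<\widehat{\xi}$ forces an output $\pi(k)$ with $k\ge 2$ under $\widehat{\xi}$ --- supplies exactly the positive-measure witness the paper's argument implicitly assumes, and is the right way to make the hypothesis bite. (Incidentally, the paper's proof also swaps the labels $a$ and $b$ in its final display given its convention $\xi>\widehat{\xi}$; your version has the inequalities consistent.)
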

\begin{proof}
Assume, for the sake of contradiction, that there exist keys
\(\xi \neq \widehat{\xi}\) such that
\(S(p,\xi,\pi)=S(p,\widehat{\xi},\pi)\) for a.e.\ \(p\in\Delta(\mathcal V)\).
Without loss of generality, let \(\xi > \widehat{\xi}\).

Denote the first and second tokens in the permutation \(\pi\) by
\(a := \pi(1)\) and \(b := \pi(2)\), respectively.
Consider a probability distribution $p(x)$
\[
p_{x}(a)=x,\qquad p_{x}(b)=1-x,\qquad p_{x}(c)=0\ \text{for all } c\notin\{a,b\},
\quad 0\le x\le 1.
\]

For any \(x\in(\widehat{\xi},\,\xi)\) we then have
\[
S\bigl(p_{x},\,\xi,\,\pi\bigr)=a
\quad\text{while}\quad
S\bigl(p_{x},\,\widehat{\xi},\,\pi\bigr)=b,
\]
contradicting the assumed equality of the two outputs. This finishes the proof.
\end{proof}

With \Cref{lem: iden_permutation} and \Cref{lem: iden_secret_key}, we are ready to present the proof of \Cref{thm: iden_main_result}, which highlights an intriguing equivalence between the two parameterizations:
%
%
%
%
\begin{proof}
By \cref{lem: iden_permutation}, we know either $\pi=\widehat{\pi}$ or $\pi=\mathrm{reverse}(\widehat{\pi})$ holds.
If $\pi=\widehat{\pi}$, we have $S(p,\xi,\pi)=S(p,\widehat{\xi},\pi)$ for almost every $p\in\Delta(\mathcal V)$.
Then, by \cref{lem: iden_secret_key}, we obtain $\xi=\widehat{\xi}$.

If $\pi=\mathrm{reverse}(\widehat{\pi})$, first note that
\begin{align}
\label{eq: reverse_fact}
S\bigl(p,\widehat{\xi},\widehat{\pi}\bigr)=
S\bigl(p,1-\widehat{\xi},\mathrm{inv}(\widehat{\pi})\bigr)
\quad\text{for almost every }p\in\Delta(\mathcal V),
\end{align}
which holds for any \(\widehat{\xi},\widehat{\pi}\).
Because \(\pi=\mathrm{inv}(\widehat{\pi})\), \eqref{eq: reverse_fact} becomes
\[
  S\bigl(p,\widehat{\xi},\widehat{\pi}\bigr)=
  S\bigl(p,1-\widehat{\xi},\pi\bigr)
  \quad\text{for almost every }p\in\Delta(\mathcal V).
\]
By assumption,
\[
  S\bigl(p,\xi,\pi\bigr)=S\bigl(p,\widehat{\xi},\widehat{\pi}\bigr)
  \quad\text{for almost every }p\in\Delta(\mathcal V),
\]
so we further have
\[
  S\bigl(p,\xi,\pi\bigr)=S\bigl(p,1-\widehat{\xi},\pi\bigr)
  \quad\text{for almost every }p\in\Delta(\mathcal V).
\]
Applying \cref{lem: iden_secret_key} again yields \(\xi=1-\widehat{\xi}\).

\end{proof}
 \newpage
 \section{Appendix: Additional Background on Watermarking}
\label{appendix:prelim}
%
In this section, we include the full algorithms used by~\citet{kuditipudi2024robust} to generate the watermark using inverse transform sampling (\Cref{alg:generate}) and to detect the watermark (\Cref{alg:detection}) for completion. 
\begin{algorithm}[ht]
\caption{Watermarked Text Generation via Inverse Transform Sampling}
\label{alg:generate}
\begin{algorithmic}[1]
\renewcommand{\algorithmicrequire}{\textbf{Input:}}
\renewcommand{\algorithmicensure}{\textbf{Output:}}

\REQUIRE Watermark key sequence $\xi = (\xi_1, \xi_2, \dots, \xi_n) \in \Xi^n$, generation length $m \in \mathbb{N}$, language model $p: \cV^* \to \Delta(\cV)$, bijective permutation function $\pi: [|\cV|] \to \cV$, vocabulary $\cV$.
\ENSURE Generated watermarked string $y = (y_1, \dots, y_m) \in \cV^m$

\FOR{$t = 1$ \TO $m$}
    \STATE $C_k = \sum_{i=1}^{k} p(\pi(i) | y_{:t-1})$, $\quad \forall k \in [|\cV|]\}$
    \STATE $\zeta = \xi_{t \bmod n}$
    \STATE $k = \min \{ k \mid C_k \geq \zeta \}$
    \STATE $y_t = \pi(k)$
\ENDFOR
\RETURN $y$
\end{algorithmic}
\end{algorithm}

At detection time, \Cref{alg:detection} is used to detect the watermark by using a permutation test to obtain a $p$- value. If the test returns a small $p$-value (typically chosen to be $p < 0.05$), then the text is watermarked.  
\begin{algorithm}[ht]
\caption{Watermark Detection via Alignment Cost}
\label{alg:detection}
\begin{algorithmic}[1]
\renewcommand{\algorithmicrequire}{\textbf{Input:}}
\renewcommand{\algorithmicensure}{\textbf{Output:}}
\REQUIRE 
\begin{itemize}
    \item Watermark key sequence $\xi = (\xi_1, \dots, \xi_n) \in [0,1]^n$.
    \item Candidate text $\tilde{y} = (\tilde{y}_1, \dots, \tilde{y}_m)$.
    \item Permutation $\pi: \{1,\dots,|\mathcal{V}|\} \rightarrow \mathcal{V}$.
    \item Language model $p: \cV^* \rightarrow \Delta(\cV)$.
    \item Number of random samples $T$ for the permutation test.
\end{itemize}
\ENSURE p-value $p$ indicating the likelihood that $\tilde{y}$ is watermarked.

\STATE $D \gets \textsc{ComputeAlignmentCost}(\tilde{y}, \xi)$
\STATE $c \gets 0$
\FOR{$i = 1$ \TO $T$}
    \STATE Sample $\mu^{(i)} \sim \mathcal{U}[0,1]^n$
    \STATE $D^{(i)} \gets \textsc{ComputeAlignmentCost}(\tilde{y}, \mu^{(i)})$
    \IF{$D \geq D^{(i)}$}
        \STATE $c \gets c + 1$
    \ENDIF
\ENDFOR
\STATE $p \gets \frac{1 + c}{T + 1}$
\RETURN $p$

\STATE \textbf{Function } \textsc{ComputeAlignmentCost}$(\tilde{y}, \xi)$:
    \STATE \quad $D \gets 0$
    \FOR{ $t = 1$ \TO $m$}
        \STATE \quad\quad $k_t \gets \pi^{-1}(\tilde{y}_t)$ \hfill $\rhd$ Obtain token index under $\pi$
        \STATE \quad\quad $C_{k_t} \gets \sum_{j=1}^{k_t} p\bigl(\pi(j) \mid \tilde{y}_{:t-1}\bigr)$ \hfill $\rhd$ Compute cumulative probability
        \STATE \quad\quad $\zeta_t \gets \xi_{t \bmod n}$ \hfill $\rhd$ Retrieve key element
        \STATE \quad\quad $d_t \gets \left| C_{k_t} - \zeta_t \right|$ \hfill $\rhd$ Token alignment cost
        \STATE \quad\quad $D \gets D + d_t$
    \ENDFOR
    \STATE \quad \textbf{return } $D$
\STATE \textbf{End Function}
\end{algorithmic}
\end{algorithm}

 \newpage
 \section{Appendix: Reverse-Engineering Pseudorandom Numbers from Watermarked Outputs with Partial Ordering}
\label{appendix:partial}

As mentioned in section~\ref{sec:reverse_engineer_key}, when the attacker only knows a partial ordering over a subset of tokens, we can use the same principles to estimate the secret keys. Algorithm~\ref{alg:reverse_prng_partial} below includes the details on how to modify Algorithm~\ref{alg:reverse_prng} in such settings.

\begin{algorithm}[!ht]
\caption{Reverse-Engineering Pseudorandom Numbers from Watermarked Outputs with Partial Ordering}
\label{alg:reverse_prng_partial}
\begin{algorithmic}[1]
\renewcommand{\algorithmicrequire}{\textbf{Input:}}
\renewcommand{\algorithmicensure}{\textbf{Output:}}
\REQUIRE 
\begin{itemize}
    \item A set of output sentences \(\mathcal{Y}\) generated by the watermarked model.
    \item A subset \(S \subset V\) with known ordering given by \(\pi^{-1}\) (i.e., for each \(t \in S\), \(\pi^{-1}(t)\) is known).
    \item Watermark key length \(n\) (i.e., there are \(n\) pseudorandom numbers \(\xi_1, \xi_2, \dots, \xi_n\)).
\end{itemize}
\ENSURE Estimated bounds for each pseudorandom number \(\xi_i\) (i.e., lower bound \(LB_i\) and upper bound \(UB_i\) for \(i=1,\dots,n\)).

\STATE \textbf{Initialization:} For \(i = 1\) to \(n\), set \(LB_i \gets 0\) and \(UB_i \gets 1\).
\FOR{each sentence \(y \in \mathcal{Y}\)}
    \FOR{each token \(y_s\) in \(y\) (with \(s\) as the token index)}
        \STATE \(i \gets s \mod n\)
        \IF{\(y_s \in S\)}
            \STATE Let \(r \gets \pi^{-1}(y_s)\).
            \STATE Compute the lower temporary bound:
            \[
            L_{\text{temp}} = \sum_{\substack{t \in S \\ \pi^{-1}(t) < r}} p\bigl(t \mid \text{prefix}\bigr).
            \]
            \STATE Compute the upper temporary bound:
            \[
            U_{\text{temp}} = 1 - \sum_{\substack{t \in S \\ \pi^{-1}(t) > r}} p\bigl(t \mid \text{prefix}\bigr).
            \]
            \STATE \textbf{Update bounds:}
            \[
            LB_i \gets \max\bigl(LB_i,\, L_{\text{temp}}\bigr), \quad UB_i \gets \min\bigl(UB_i,\, U_{\text{temp}}\bigr).
            \]
        \ENDIF
    \ENDFOR
\ENDFOR
\RETURN \(\{(LB_i, UB_i)\}_{i=1}^n\)
\end{algorithmic}
\end{algorithm}

 \newpage
 \section{Appendix: Recover Permutation via MergeSort}
\label{appendix:merge_sort}

In this section we include the details of the algorithm to recover the permutation order via MergeSort, as mentioned in section~\ref{sec:two-token-single-step}.

\begin{algorithm}[H]
\caption{Recover Global Ordering via QueryLLM and MergeSort}
\label{alg:recover_ordering}
\begin{algorithmic}[1]
\renewcommand{\algorithmicrequire}{\textbf{Input:}}
\renewcommand{\algorithmicensure}{\textbf{Output:}}
\REQUIRE Vocabulary of tokens \(T = \{t_1, t_2, \ldots, t_n\}\)
\ENSURE Ordered token sequence \(T_{\text{sorted}}\) equivalent to the hidden permutation \(\pi\) (or its reverse)

\STATE \(T_{\text{sorted}} \gets \text{MERGESORT}(T)\)
\RETURN \(T_{\text{sorted}}\)

\STATE \textbf{Function } \texttt{QueryLLM(\(a, b\))}:
\STATE \quad Query the language model with a prompt that forces a random choice between \(a\) and \(b\).
\STATE \quad \textbf{If} the model outputs token \(b\), \textbf{then} interpret as \(a < b\);
\STATE \quad \quad \textbf{Else} interpret as \(b < a\).
\STATE \quad \textbf{Return} the corresponding comparison result.
\STATE \textbf{End Function}
\STATE \textbf{Function } MERGESORT(\(arr\)):
\STATE \quad \textbf{If} length(\(arr\)) \(\leq 1\) 
\STATE \quad \quad \quad \textbf{return} \(arr\)
\STATE \quad \(mid \gets \lfloor \text{length}(arr)/2 \rfloor\)
\STATE \quad \(left \gets \text{MERGESORT}(arr[0:mid])\)
\STATE \quad \(right \gets \text{MERGESORT}(arr[mid:])\)
\STATE \quad \RETURN MERGE(\(left, right\))
\STATE \textbf{End Function}
\STATE \textbf{Function } MERGE(\(left, right\)):
\STATE \quad \(merged \gets\) empty list; \(i \gets 0\); \(j \gets 0\)
\STATE \quad \textbf{While} \(i < \text{length}(left)\) and \(j < \text{length}(right)\):
\STATE \quad\quad \textbf{If} QueryLLM(\(left[i],\, right[j]\)) returns "a < b":
\STATE \quad\quad\quad Append \(left[i]\) to \(merged\)
\STATE \quad\quad\quad \(i \gets i + 1\)
\STATE \quad\quad \textbf{Else}:
\STATE \quad\quad\quad Append \(right[j]\) to \(merged\)
\STATE \quad\quad\quad \(j \gets j + 1\)
\STATE \quad \textbf{End While}
\STATE \quad Append remaining elements of \(left[i:]\) to \(merged\)
\STATE \quad Append remaining elements of \(right[j:]\) to \(merged\)
\STATE \quad \RETURN \(merged\)
\STATE \textbf{End Function}
\end{algorithmic}
\end{algorithm}

We provide the following result to illustrate why Algorithm~\ref{alg:recover_ordering} is indeed a valid approach in our setting.

\begin{fact}
\label{fact_algorithm7}
Let the probability distribution \( p \) be defined over tokens as
\[
p(a) = 0.5, \quad p(b) = 0.5, \quad p(c) = 0 \text{ for all other tokens } c.
\]
Suppose the permutation \(\pi\) satisfies \(\pi^{-1}(b) > \pi^{-1}(a)\). Then the selection output satisfies:
\[
S(p, \xi, \pi) =
\begin{cases}
a, & \text{if } \xi \leq 0.5, \\
b, & \text{if } \xi > 0.5.
\end{cases}
\]
\end{fact}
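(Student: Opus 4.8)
The plan is to unwind the definition of the selection function $S$ given in \Cref{alg:generate} and compute the cumulative distribution function $C_k$ explicitly for the two-atom distribution $p$. First I would set $i_a := \pi^{-1}(a)$ and $i_b := \pi^{-1}(b)$, so that the hypothesis $\pi^{-1}(b) > \pi^{-1}(a)$ reads $i_a < i_b$. Since $p$ places mass only on $a$ and $b$, accumulating probability along the permutation order gives $C_k = 0$ for every $k < i_a$, then $C_k = 0.5$ for every $i_a \le k < i_b$ (once $a$ has been included), and finally $C_k = 1$ for every $k \ge i_b$ (once $b$ has also been included). In particular the CDF is a step function taking only the three values $0, 0.5, 1$, with jumps located precisely at the indices $i_a$ and $i_b$.

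Next I would apply the selection rule $k = \min\{k : C_k \ge \xi\}$ in the two regimes. For $0 < \xi \le 0.5$, every index below $i_a$ has $C_k = 0 < \xi$, while $C_{i_a} = 0.5 \ge \xi$; hence the minimiser is $k = i_a$ and $S(p,\xi,\pi) = \pi(i_a) = a$. For $\xi > 0.5$, every index below $i_b$ has $C_k \le 0.5 < \xi$, while $C_{i_b} = 1 \ge \xi$; hence the minimiser is $k = i_b$ and $S(p,\xi,\pi) = \pi(i_b) = b$. This reproduces exactly the claimed case split, and the boundary $\xi = 0.5$ is already handled correctly by the non-strict inequality $C_k \ge \xi$, which places it in the first case and returns $a$.

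The only delicate point, which I expect to be the sole source of friction rather than a genuine obstacle, is the degenerate value $\xi = 0$: there $\min\{k : C_k \ge 0\} = 1$ selects $\pi(1)$, which need not equal $a$ when $i_a > 1$. Since $\xi \sim \unif[0,1]$ in the watermarking protocol, this event has probability zero and is harmless for the almost-sure statements used downstream, so I would simply remark on it. Beyond this, the argument is a direct evaluation of a three-valued step function; the only care needed is to confirm that the zero-probability tokens can never become the minimiser (they cannot, since including them leaves $C_k$ unchanged) and to keep each inequality on the correct side of the two jumps.
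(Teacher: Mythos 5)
Your proof is correct and is exactly the direct verification the paper intends: the paper omits the proof of \Cref{fact_algorithm7} entirely (``straight forward, therefore we omit it''), and your explicit evaluation of the three-valued step CDF \(C_k \in \{0, 0.5, 1\}\) with jumps at \(\pi^{-1}(a)\) and \(\pi^{-1}(b)\), followed by the two-regime application of the rule \(k = \min\{k : C_k \geq \xi\}\), is precisely the omitted argument. Your remark on the degenerate value \(\xi = 0\) — where the selection rule returns \(\pi(1)\), which need not equal \(a\) when \(\pi^{-1}(a) > 1\) — is a genuine (measure-zero) caveat that the fact as literally stated glosses over, and which the almost-everywhere framing used downstream in \Cref{appendix:math} renders harmless.
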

The proof of \cref{fact_algorithm7} is straight forward, therefore we omit it.

With this fact established, the validity of \cref{alg:recover_ordering} becomes clear: if the corresponding secret key satisfies $\xi \leq 0.5$, then \texttt{QueryLLM}$(a,b)$ returns the token appearing earlier in the permutation $\pi$, causing the algorithm to recover $\textit{inv}(\pi)$. Otherwise, it returns the token appearing later, enabling direct recovery of $\pi$. In either case, the algorithm remains valid.

 \newpage
 \section{Appendix: Additional Experiments}
\label{appendix:experiment}

\subsection{Evaluating Estimated Vocabulary Permutation \(\pi\) and Estimated Secret Key}
First, we investigate whether an attacker can successfully recover the secret permutation \(\pi\) and key sequence \(\{\xi\}\). As described in Section~\ref{sec:two-token-single-step}, when the permutation is unknown, the attacker can estimate it by adaptively querying the LLM with pairwise token comparisons (the \texttt{QueryLLM(a,b)} procedure in \Cref{appendix:merge_sort}) and aggregating these comparisons via a merge-sort algorithm. This process would accurately estimate the true permutation in an ideal scenario with a perfectly consistent LLM that always follows instructions. In practice, LLMs may not provide a full-ordered preference for all token pairs (due to model uncertainties or equal probabilities). We address this issue by forcing a decision in ambiguous comparisons: if the model returns a tie, we break the tie arbitrarily to obtain a complete ordered list. We find this strategy effective: using well-engineered comparison prompts, an attacker can obtain an accurate estimation of the permutation over the vocabulary. We observe that smaller model (\eg OPT-125M) are often easier to spoof, \ie spoofed generations more consistently pass watermark detection despite only partial permutation recovery. We attribute this to two primary factors: (i) smaller models tend to have lower-entropy token distributions, making the ITS watermarking mechanism more brittle to approximate key values and permutation errors; and (ii) the lower diversity of output sequences in smaller models reduces alignment cost variance, leading to more forgiving p-values under the permutation test. However, while spoofing detection is easier, the output quality suffers: larger models like LLaMA-3.1-8B and Mistral-7B produce significantly more fluent and coherent text, as evidenced by consistently lower perplexity  and higher cosine similarity scores.

Figure~\ref{fig:key_mae_and_table} shows the mean absolute error (MAE) when estimating the secret keys as a function of the percentage of known permutation $\pi$ for the model vocabulary. As expected, we observe that the more information on $\pi$, the more accurate the estimation process, with the estimation error being virtually unchanged after at least $60\%$ of the permutation is known, validating the use of Algorithm~\ref{alg:reverse_prng} and the theoretical insights in Appendix~\ref{appendix:math}. Additionally, we note that in practice a successful spoofing attack might depend on the dataset, model and prompt used, and hence we leave for a future work exploring the connection between the estimation error of secret keys and the feasibility of spoofing attacks.

\begin{figure}[!ht]
    \centering
    \includegraphics[width=.6\linewidth]{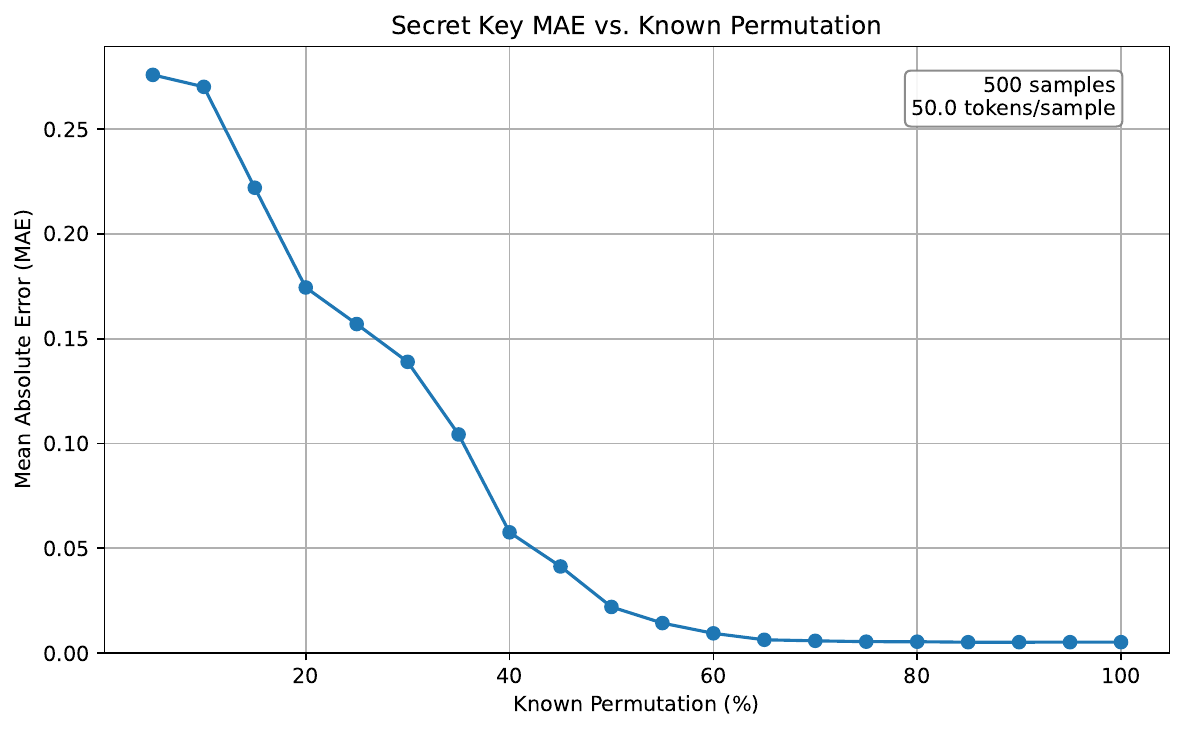}
    \caption{Secret key estimation error (MAE) as a function of known permutation fraction $\pi$ using 500 samples of 50 tokens each.}
    \label{fig:key_mae_and_table}
\end{figure}

\subsection{Additional spoofing evaluations}
In \Cref{sec:experiment}, we have presented \Cref{tab:median-results} containing the summary statistics for evaluating the spoofing attack on different LLMs. For completeness, we also provide additional tables of the same format in the appendix. The main differences between the results in \Cref{tab:median-results} and these additional tables are two-fold: (i) in \Cref{tab:median-results-nothreshold} and~\ref{tab:mean-results-nothreshold}, we report the results without filtering outliers as described in \Cref{sec:experiment} (we filter the samples with PPL values that exceed the 95th percentile within each batch); and (ii) in \Cref{tab:mean-results-threshold} and~\ref{tab:mean-results-nothreshold}, we report the mean $p$-value across $100$ samples instead of the median $p$-value. Furthermore, we provide the comparison of $p$-value and perplexity across different LLMs without filtering outliers in PPL values \Cref{fig:bar_box_perp_pval_plots_no_thresh}. In these new results, our observations are consistent with the findings in \Cref{sec:experiment}. Our proposed approach can successfully spoof the watermark by \citet{kuditipudi2024robust} across different LLMs, and the spoofing results generally improves with more capable LLMs.  

\begin{table}[ht!]
\caption{Comparison of baseline and spoofed outputs with various LLMs. We report the mean $p$-value (p-val) for detection test, and a combination of perplexity (PPL) and cosine similarity (co-sim) with thresholding for the text quality assessment.}
\label{tab:mean-results-threshold}
\centering
\begin{tabular}{lccccc}
\toprule
\textbf{Model} & \multicolumn{2}{c}{\textbf{Non-WM}} & \multicolumn{3}{c}{\textbf{WM}} \\ \cmidrule(r){2-3} \cmidrule(r){4-6}
 & p-val & PPL & p-val & PPL & co-sim \\ \midrule
LLaMA-3.1-8B & 0.5 & 5.67 & 3.4e-02 & 29.79 & 0.866 \\
Mistral-7B & 0.5 & 9.56 & 8.2e-03 & 65.08 & 0.855 \\
Gemma-7B & 0.5 & 9.96 & 8.7e-02 & 119.41 & 0.846 \\
OPT-125M & 0.5 & 9.84 & 1.1e-04 & 176.73 & 0.834 \\
\bottomrule
\end{tabular}
\vspace{2mm}
\begin{tabular}{lcccccc}
\toprule
\textbf{Model} & \multicolumn{3}{c}{\textbf{Spoof 50}} & \multicolumn{3}{c}{\textbf{Spoof 25}} \\ \cmidrule(r){2-4} \cmidrule(r){5-7}
 & p-val & PPL & co-sim & p-val & PPL & co-sim \\ \midrule
LLaMA-3.1-8B & 5.6e-02 & 108.04 & 0.869 & 1.2e-01 & 55.53 & 0.854 \\
Mistral-7B & 6.1e-03 & 65.83 & 0.86 & 3.4e-02 & 74.79 & 0.858 \\
Gemma-7B & 7.6e-02 & 119.84 & 0.838 & 1.6e-01 & 134.15 & 0.833 \\
OPT-125M & 1.0e-04 & 150.08 & 0.832 & 1.4e-02 & 155.6 & 0.836 \\
\bottomrule
\end{tabular}
\end{table}

\begin{table}[ht!]
\caption{Comparison of baseline and spoofed outputs with various LLMs. We report the median $p$-value (p-val) for detection test, and a combination of perplexity (PPL) and cosine similarity (co-sim) without thresholding for the text quality assessment.}
\label{tab:median-results-nothreshold}
\centering
\begin{tabular}{lccccc}
\toprule
\textbf{Model} & \multicolumn{2}{c}{\textbf{Non-WM}} & \multicolumn{3}{c}{\textbf{WM}} \\ \cmidrule(r){2-3} \cmidrule(r){4-6}
 & p-val & PPL & p-val & PPL & co-sim \\ \midrule
LLaMA-3.1-8B & 0.5 & 5.5 & 1.0e-04 & 18.09 & 0.871 \\
Mistral-7B & 0.5 & 13.01 & 1.0e-04 & 66.91 & 0.861 \\
Gemma-7B & 0.5 & 11.22 & 1.0e-04 & 65.53 & 0.85 \\
OPT-125M & 0.5 & 10.08 & 1.0e-04 & 142.79 & 0.836 \\
\bottomrule
\end{tabular}
\vspace{2mm}
\begin{tabular}{lcccccc}
\toprule
\textbf{Model} & \multicolumn{3}{c}{\textbf{Spoof 50}} & \multicolumn{3}{c}{\textbf{Spoof 25}} \\ \cmidrule(r){2-4} \cmidrule(r){5-7}
 & p-val & PPL & co-sim & p-val & PPL & co-sim \\ \midrule
LLaMA-3.1-8B & 1.0e-04 & 24.52 & 0.871 & 1.0e-04 & 27.47 & 0.859 \\
Mistral-7B & 1.0e-04 & 66.89 & 0.864 & 1.0e-04 & 80.61 & 0.863 \\
Gemma-7B & 1.0e-04 & 84.61 & 0.837 & 1.0e-04 & 92.22 & 0.829 \\
OPT-125M & 1.0e-04 & 142.04 & 0.834 & 1.0e-04 & 147.35 & 0.845 \\
\bottomrule
\end{tabular}
\end{table}

\begin{table}[ht!]
\caption{Comparison of baseline and spoofed outputs with various LLMs. We report the mean $p$-value (p-val) for detection test, and a combination of perplexity (PPL) and cosine similarity (co-sim) without thresholding for the text quality assessment.}
\label{tab:mean-results-nothreshold}
\centering
\begin{tabular}{lccccc}
\toprule
\textbf{Model} & \multicolumn{2}{c}{\textbf{Non-WM}} & \multicolumn{3}{c}{\textbf{WM}} \\ \cmidrule(r){2-3} \cmidrule(r){4-6}
 & p-val & PPL & p-val & PPL & co-sim \\ \midrule
LLaMA-3.1-8B & 0.5 & 6.27 & 3.2e-02 & 409.25 & 0.866 \\
Mistral-7B & 0.5 & 14.69 & 7.8e-03 & 89.11 & 0.855 \\
Gemma-7B & 0.5 & 12.04 & 8.1e-02 & 172.82 & 0.846 \\
OPT-125M & 0.5 & 11.62 & 1.1e-04 & 199.52 & 0.834 \\
\bottomrule
\end{tabular}
\vspace{2mm}
\begin{tabular}{lcccccc}
\toprule
\textbf{Model} & \multicolumn{3}{c}{\textbf{Spoof 50}} & \multicolumn{3}{c}{\textbf{Spoof 25}} \\ \cmidrule(r){2-4} \cmidrule(r){5-7}
 & p-val & PPL & co-sim & p-val & PPL & co-sim \\ \midrule
LLaMA-3.1-8B & 3.3e-02 & 299.5 & 0.869 & 1.6e-02 & 717.58 & 0.854 \\
Mistral-7B & 5.7e-03 & 87.97 & 0.86 & 4.2e-02 & 98.18 & 0.858 \\
Gemma-7B & 1.0e-01 & 153.86 & 0.838 & 1.5e-01 & 163.26 & 0.833 \\
OPT-125M & 1.0e-04 & 185.79 & 0.832 & 1.3e-02 & 172.88 & 0.836 \\
\bottomrule
\end{tabular}
\end{table}

\begin{figure}[ht!]
    \centering
    \includegraphics[width=1.0\linewidth]{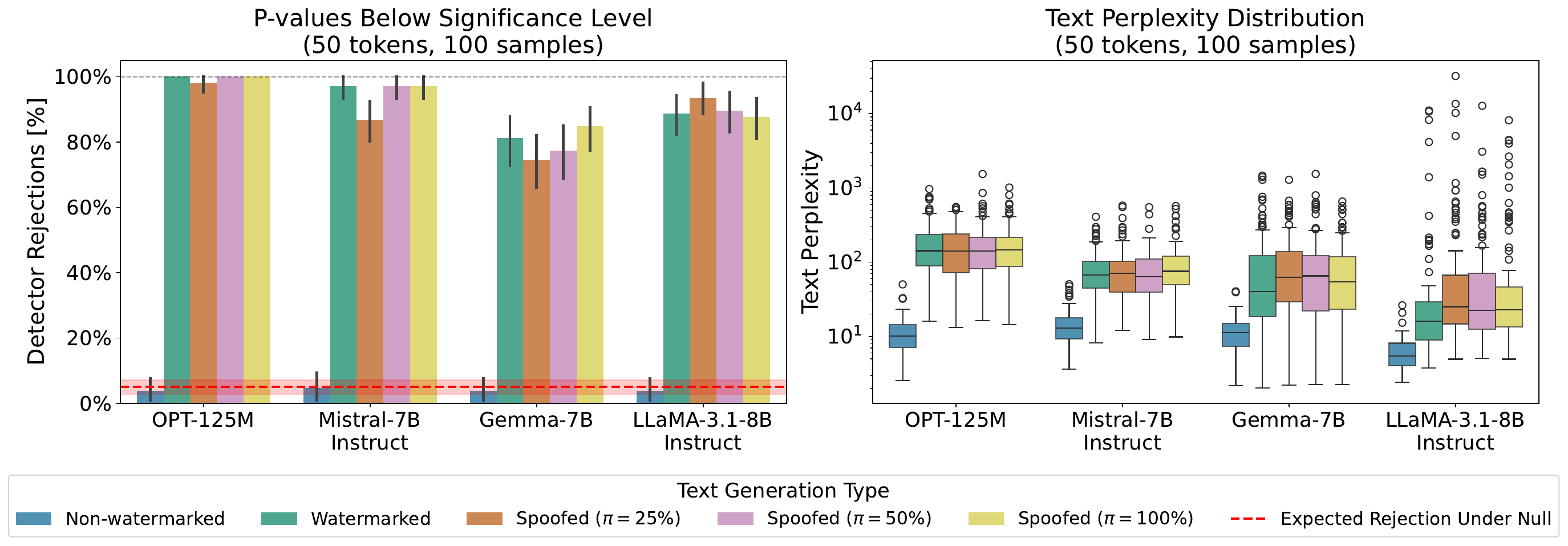}
    \caption{Spoofing results as in Figure~\ref{fig:pval_boxplots_threshold}, but without removing generated low-quality text. \textit{Left}: Percentage of $p$-values below the significance level ($\alpha=0.05$) for watermark detection in Algorithm~\ref{alg:detection}, across LLMs and known permutation $\pi$ proportions. Our spoofing attacks achieve similar rejection rates as watermarked samples from each model. \textit{Right:} Distribution of text perplexity for the generated samples, across LLMs and known permutation $\pi$ proportions. Our spoofed attacks do not exhibit significant difference in perplexity distributions with respect to the watermarked text.}
    \label{fig:bar_box_perp_pval_plots_no_thresh}
\end{figure}


\newpage

\subsection{Spoofing Examples}
We provide additional examples of our spoofing attacks. Specifically, under the same prompt "What is Quantum Computing", we report the outputs from a non-watermarked LM, watermarked LM and our spoofing attack as well as the $p$-value (for watermark detection) and perplexity score (for text quality) using LLaMA-3.1-8B-Instruct for all generation types.

\textbf{Prompt: \texttt{"What is Quantum Computing?"} (200 tokens)}
\paragraph{Watermarked Output}
\begin{quote}
A quantum computer is a developing field of technology that uses the principles of quantum mechanics, such as superposition and entanglement, to perform calculations and operations that are beyond the capabilities of traditional computers.

Quantum computers are designed to solve complex problems, such as simulating chemical reactions or factoring large numbers, exponentially faster than classical computers. They work by manipulating qubits (quantum bits) that can exist in multiple states simultaneously, allowing for massive parallel processing and exploration of vast solution spaces.

The core concept of quantum computing is to exploit the inherent properties of quantum systems to perform operations that cannot be simulated with classical computers. These properties include:

\textbf{Quantum superposition:} A qubit can represent both 0 and 1 simultaneously, enabling it to process multiple possibilities at once.

\textbf{Quantum entanglement:} Qubits can become “entangled,” meaning their properties are connected, allowing for instant communication and correlation between them.

\textbf{Quantum measurement:} Measuring a quantum system collapses it into a definite state, which is used to extract information from the system.
\end{quote}

\paragraph{Non-Watermarked Output}
\begin{quote}
What is quantum computing? Quantum computing is a new paradigm for computing that uses the principles of quantum mechanics to perform calculations and operations on data.

Quantum computing is based on the principles of superposition, entanglement, and interference, which allow for the processing of vast amounts of data in parallel. Quantum computing has the potential to solve complex problems that are currently unsolvable with classical computers.

Quantum computing is a new and rapidly evolving field that has the potential to revolutionize the way we approach complex problems in fields such as cryptography, optimization, and machine learning. Quantum computing is based on the principles of quantum mechanics, which describe the behavior of matter and energy at the smallest scales. Quantum computing uses quantum bits, or qubits, which can exist in multiple states simultaneously, unlike classical bits, allowing for massive parallelism in computation.
\end{quote}







\paragraph{Spoofed Output}
\begin{quote}
A quantum computer is a computer that can use quantum mechanical phenomena, such as superposition and entanglement, that enable it to solve certain problems that are difficult for a classical computer to solve, faster and more efficiently. Quantum computing is a new and rapidly evolving field which has seen significant advances in recent years.

Quantum computers are based on quantum mechanical systems, such as quantum bits (qubits) and quantum gates, and are able to perform certain operations that are beyond the capabilities of classical computers. Quantum computers have the potential to revolutionize the field of computation and may have important applications in fields such as cryptography, optimization, and simulation.

A classical computer works on binary (0, 1) bits to process information and make decisions. In contrast, a quantum computer will work on quantum bits (0, 1, 2, and other states), known as qubits, which allow for complex operations and parallelism.

Quantum computing has the potential to accelerate countless applications and industries.
\end{quote}

\begin{table}[ht!]
\centering
\caption{Summary statistics comparing generation types by detection p-value and perplexity.}
\label{tab:summary-stats}
\begin{tabular}{@{}lcc@{}}
\toprule
\textbf{Generation Type} & \textbf{p-value} & \textbf{Perplexity} \\
\midrule
Watermarked       & $9.999 \times 10^{-5}$ & 2.74 \\
Non-Watermarked   & 0.469                  & 1.56 \\
Spoofed           & $9.999 \times 10^{-5}$ & 11.49 \\
\bottomrule
\end{tabular}
\end{table}

 \newpage
 \section{Computational Resources}
\label{appendix:computer_resources}

Both experiments and evaluations are run on an Ubuntu Machine with 96 Intel Xeon Platinum 8259CL CPUs, 384GB of RAM, a storage volume of 500 Gb and 8 A100 NVIDIA Tensor Core GPUs.




\end{document}